
\documentclass[10pt,journal,compsoc]{IEEEtran}
%


%

\usepackage{cite}
\usepackage{amsthm}
\usepackage{amsmath,amssymb,amsfonts}
\usepackage{algorithmic}
\usepackage{graphicx}
\usepackage{textcomp}
\usepackage{xcolor}

\usepackage{hyperref}
\usepackage{subfigure}
\usepackage{booktabs}
\usepackage{multirow}
\usepackage{multicol}
\usepackage[ruled,linesnumbered]{algorithm2e}

\newtheorem{theorem}{\textbf{Theorem}}

\newtheorem{lemma}{\textbf{Lemma}}
\newtheorem{definition}{\textbf{Definition}}

\def\BibTeX{{\rm B\kern-.05em{\sc i\kern-.025em b}\kern-.08em
    T\kern-.1667em\lower.7ex\hbox{E}\kern-.125emX}}

\ifCLASSINFOpdf
\else
\fi
\hyphenation{op-tical net-works semi-conduc-tor}

\begin{document}
%
\title{Efficient Federated Learning with Enhanced Privacy via Lottery Ticket Pruning \\in Edge Computing}
%
%
%
%

\author{Yifan~Shi,
        Kang~Wei,~\IEEEmembership{ Member,~IEEE,}
        Li~Shen,
        Jun~Li, ~\IEEEmembership{Senior Member,~IEEE,}
        Xueqian~Wang, ~\IEEEmembership{Member,~IEEE,}
        Bo~Yuan, ~\IEEEmembership{Senior Member,~IEEE}, and Song Guo,~\IEEEmembership{Fellow,~IEEE}
\IEEEcompsocitemizethanks{\IEEEcompsocthanksitem Yifan Shi and Xueqian Wang are with the Center for Artificial Intelligence and Robotics, Shenzhen International Graduate School, Tsinghua University, 518055 Shenzhen, P.R. China (e-mail: shiyf21@mails.tsinghua.edu.cn; wang.xq@sz.tsinghua.edu.cn).

\IEEEcompsocthanksitem Li Shen is with JD Explore Academy, Beijing, China.
(e-mail: mathshenli@gmail.com)
\IEEEcompsocthanksitem Kang~Wei and Song Guo are with the Department of
Computing, Hong Kong Polytechnic University, Hong Kong 999077, China
(e-mail: \{song.guo, adam-kang.wei\}@polyu.edu.hk).
\IEEEcompsocthanksitem Jun~Li is with the School of Electrical and Optical Engineering, Nanjing University of Science and Technology, Nanjing, China. E-mail: jun.li@njust.edu.cn.

\IEEEcompsocthanksitem Bo Yuan is with Shenzhen Wisdom and Strategy Technology Co., Ltd., Shenzhen 518055, China (e-mail: boyuan@ieee.org).
}
}
%
%

\markboth{Journal of \LaTeX\ Class Files,~Vol.~14, No.~8, December~2022}%
{Shell \MakeLowercase{\textit{et al.}}: Bare Demo of IEEEtran.cls for Computer Society Journals}
%



\IEEEtitleabstractindextext{%

\begin{abstract}
Federated learning (FL) is a collaborative learning paradigm for decentralized private data from mobile terminals (MTs). However, it suffers from issues in terms of communication, resource of MTs, and privacy. Existing privacy-preserving FL methods usually adopt the instance-level differential privacy (DP), which provides a rigorous privacy guarantee but with several bottlenecks: severe performance degradation, transmission overhead, and resource constraints of edge devices such as MTs. To overcome these drawbacks, we propose Fed-LTP, an efficient and privacy-enhanced FL framework with \underline{\textbf{L}}ottery \underline{\textbf{T}}icket \underline{\textbf{H}}ypothesis (LTH) and zero-concentrated D\underline{\textbf{P}} (zCDP). It generates a pruned global model on the server side and conducts sparse-to-sparse training from scratch with zCDP on the client side. On the server side, two pruning schemes are proposed: (i) the weight-based pruning (LTH) determines the pruned global model structure; (ii) the iterative pruning further shrinks the size of the pruned model's parameters. Meanwhile, the performance of Fed-LTP is also boosted via model validation based on the Laplace mechanism. 
On the client side, we use sparse-to-sparse training to solve the resource-constraints issue and provide tighter privacy analysis to reduce the privacy budget. We evaluate the effectiveness of Fed-LTP on several real-world datasets in both independent and identically distributed (IID) and non-IID settings. The results clearly confirm the superiority of Fed-LTP over state-of-the-art (SOTA) methods in communication, computation, and memory efficiencies while realizing a better utility-privacy trade-off.

\end{abstract}

\begin{IEEEkeywords}
Federated Learning, Differential Privacy, Lottery Ticket Hypothesis, Zero-concentrated DP, Mobile Edge Computing
\end{IEEEkeywords}

}

\maketitle

\IEEEdisplaynontitleabstractindextext

%
\IEEEpeerreviewmaketitle

\IEEEraisesectionheading{\section{Introduction}\label{sec:introduction}}

%
%
%
%
Federated learning (FL) \cite{Li2020federated} allows distributed clients, e.g., mobile terminals (MTs), to collaboratively train a shared model under the orchestration of the cloud without sharing their local data\footnote{For instance, as a classic method in FL, Fed-Avg \cite{McMahan2017Communication} uses SGD to train the MTs selected in a distributed manner for multiple rounds in parallel, and then aggregates the model updates of each mobile to improve the global model's performance.}. 
However, FL faces several critical challenges, such as computational resources, memory, communication bandwidth, and privacy leakage \cite{Kairouz2021Advances}.
Most of recent works mainly focus on either the communication cost~\cite{McMahan2017Communication,sattler2019robust,hamer2020fedboost,dai2022dispfl} or resource overhead of MTs~\cite{xu2019elfish,wang2019adaptive,jiang2022model,huang2022achieving}. Furthermore, a curious server can also infer MTs’ privacy information such as membership and data features by well-designed generative models and/or shadow models~\cite{Fredrikson2015model,Shokri2017membership,Melis2018inference,Nasr2019comprehensive,zhang2022fine}. To address the privacy issue, differential privacy (DP) \cite{Dwork2014the}, the \emph{de-facto} standard in FL, can protect every instance in any mobile's dataset and the information between MTs (instance-level DP~\cite{Agarwal2018cpsgd,Hu2021federated, Sun2021federated,Sun2021practical}) or, less rigorously, only the information between MTs (client-level DP~\cite{McMahan2018learning,Robin2017differentially,KairouzL2021the,Rui2022Federated,Anda2022differentially, shi2023make}). 
For example, a bank needs an instance-level DP method to protect each data record of each customer from being identified, whereas a language prediction model in mobile devices only needs to protect the ownership of the data, and the client-level DP is sufficient. However, all DP methods introduce extra random noise proportional to the model size, which can lead to severe performance degradation, especially for instance-level DP. 

To mitigate the performance degradation and communication efficiency issues, existing instance-level DP techniques \cite{Agarwal2018cpsgd,Hu2021federated} use the local update sparsification method before uploading to improve the utility-privacy trade-off while reducing communication cost. Nevertheless, they still suffer from the following drawbacks: 1) only the communication cost of uploading (client-to-server) is reduced, without considering the server-to-client  cost; 2) the computational overhead and memory footprint of the mobile remains unchanged; 3) they only focus on the differentially private training without considering the model validation, and thus the resulting model is not necessarily optimal; 4) the sparsification method~\cite{Hu2021federated} has a large randomness and may cause performance degradation when the sparsity is high. Consequently, a critical question is: \emph{how to design privacy-preserving algorithm that can properly balance computation, memory efficiency of edge devices, and communication efficiency with improved model utility?}

To answer this question, we propose an efficient and privacy-enhanced \underline{\textbf{Fed}}erated learning framework with \underline{\textbf{L}}ottery \underline{\textbf{T}}icket \underline{\textbf{H}}ypo-
thesis (LTH) and zero-concentrated D\underline{\textbf{P}} (zCDP) method, named Fed-LTP. 
The key novelty of Fed-LTP lies in: \textbf{(i)} two server-side pruning schemes are designed to obtain a high-quality initial model: a weight-based pruning scheme to create a pretrained model also known as a winning ticket (WT), and a further iterative pruning scheme to create heterogeneous mobile models with different pruning degrees for further reducing the computation and
communication overheads; \textbf{(ii)} a server-side WT-broadcasting mechanism to ensure the stability and convergence of the global model while alleviating the large computational overhead and memory footprint of edge devices;
\textbf{(iii)} training the locally pruned model with zCDP to alleviate the privacy budget and using the Laplace mechanism based on the private validation dataset to get validation scores on the client side, which are then uploaded to the server for model validation to select the best global model and prevent over-fitting.

In summary, our main contributions are four-fold: 
\begin{itemize}
\item We are the first to introduce LTH into FL with DP and propose an efficient and privacy-enhanced FL framework (Fed-LTP), effectively alleviating the client-side resource constraints in terms of memory and computation while maintaining model utility and considering the two-way communication cost. 
\item We propose two server-side pruning schemes: a weight-based pruning scheme and a further iterative pruning scheme, to optimize the balance among utility, communication cost, and resource overhead of edge devices.  
\item We provide a new and tight privacy analysis (zCDP) on the privacy budget  for both training and validation data in each mobile to increase the level of privacy protection while maintaining the model utility/performance, thereby optimizing the utility-privacy trade-off.
\item  Compared with SOTA methods on various real-world datasets in both IID and non-IID settings, the effectiveness and superiority of our framework has been empirically validated.
\end{itemize}

Section II reviews the related work on instance-level DP and LTH in FL. Section III introduces the background of FL and DP. The proposed Fed-LTP is detailed in Section IV and the privacy analysis is conducted in Section V. Extensive experimental evaluation is presented in Section VI. This paper is concluded in Section VII with suggested directions for future work.

\section{Related Work}

\textbf{Instance-level DP in FL.} Recently, instance-level DP \cite{Hu2021federated,Sun2021federated,Sun2021practical} has been an emerging topic in FL. Fed-SPA~\cite{Hu2021federated} integrates random sparsification with gradient perturbation to obtain a better utility-privacy trade-off and reduce communication cost. Meanwhile, it uses the acceleration technique to ease the slow convergence issue. The federated model distillation framework FEDMD-NFDP \cite{Sun2021federated} can achieve improved performance under heterogeneous model architectures and eliminate the risk of white-box inference attacks by sharing model predictions. The work in \cite{Sun2021practical} studies the model aggregation of local differential privacy (LDP) and proposes an empirical solution to achieve a strict privacy guarantee for applying LDP to FL. 

\noindent
\textbf{Lottery Ticket Hypothesis.} LTH \cite{Frankle2019the} is a popular pruning method in a centralized machine learning setting. It generates the winning tickets (WTs) by iterative pruning, which allows for fast convergence close to the original model performance under the same training epochs. In the recent progress of LTH \cite{frankle2019stabilizing, frankle2020linear}, the two lines related to our work are the extension of LTH in FL and the extension of centralized machine learning with DP.
For instance, LotteryFL \cite{li2020lotteryfl} is a personalized and communication-efficient FL framework via exploiting LTH on non-IID datasets. HeteroFL \cite{diao2020heterofl} can be used to address heterogeneous MTs equipped with vastly different computation and communication capabilities. The work in \cite{itahara2020lottery} uses unlabeled public data to pretrain the model, and then uses LTH to compress the model for reducing the communication cost without affecting performance. CELL \cite{seo2021communication} extends LotteryFL by exploiting the downlink broadcast to improve communication efficiency. 
Compared with the above studies, our work is more closely related to PrunFL \cite{jiang2022model} with adaptive and distributed parameter pruning, which considers the limited resources of edge devices, and reduces both communication and computation overhead and minimizes the overall training time while maintaining a similar accuracy as the original model. By contrast, the combination of LTH and DP has been relatively less explored. DPLTM \cite{gondara2020differentially} uses “high-quality winners” and the custom score function for selection to improve the privacy-utility trade-off. Experimental studies show that DPLTM can achieve fast convergence, allowing for early stopping with reduced privacy budget consumption and reduced noise impact comparable to DPSGD \cite{Abadi2016Deep}.

Different from the existing works, our work is the first to introduce LTH into FL with public data to obtain an initial global model and network structure. In this way, the sparse structure can reduce the system costs (transmission and computation) and alleviate the performance degradation caused by the random noise injection. Meanwhile, the model validation with the Laplace mechanism is proposed to guarantee the performance of the final model.

\section{PRELIMINARY}
\begin{table*}[ht]
\centering
\small
\caption{Summary of main notation}
\label{notion}
\begin{tabular}{c|c} 
\toprule
\begin{tabular}[c]{@{}l@{}}$\mathcal D_{pub}, \mathcal D_{pr}$\\$\mathcal M$\\$\mathcal D, \mathcal D^{\prime}$\\$\epsilon, \delta$\\$\alpha, \rho$\\$\mathcal D^{\text{train}}_{i}, \mathcal D^{\text{val}}_{i}, \mathcal  D^{\text{test}}_{i}$ \\$S_i$ , $s_{val}$ \\$\left | \cdot \right |$ \\$\mathcal U,  U$ \\$\mathcal K, K$ \\$t, T$ \\$\boldsymbol{w}$ \\$\boldsymbol{w}_{j}$ \\$\boldsymbol{\theta}$  \\$\boldsymbol{\theta}_{i}^{t}$ \\$\boldsymbol{\hat{\theta}}_{i}^{t}$ \\$\boldsymbol{\hat{\theta}}^{t}$ \\$F(\boldsymbol{\theta}_{i})$ \\$\boldsymbol{M}$ \\$Pr$ \\$M$ \\$k$ \\$P_1^{fip}$ \\ $P_2^{fip}$\\$p$ \\$\gamma$ \\$\Delta _{i}^{t}$\end{tabular} & \begin{tabular}[c]{@{}l@{}}The public dataset and privacy dataset \\A randomized mechanism for DP \\Adjacent databases \\The parameters related to DP \\The parameters related to Rényi-DP  \\The training, validation, and test database held by the $i$-th user/client, respectively \\The validation scores and the validate function \\The cardinality of a set  \\The set of all MTs and total number of all MTs\\The set of all selected MTs and total number of selected MTs \\The index of the $t$-th communication round and the number of communication rounds \\Model parameters of all winning tickets \\Model parameters of the $j$-th winning ticket \\Model parameters of the global model  \\Model parameters of the $i$-th user/client at coomunication round $t$ \\Further pruned model parameters of the $i$-th user/client at coomunication round $t$\\The global model at coomunication round $t$ \\Global loss function from the $i$-th user  \\The mask vector of pruned model  \\The pruning degree/ratio to generate winning tickets \\The number of winning tickets \\The number of the iterations for training winning tickets  \\The initially selected WT's retention rate $P_1^{fip} = Pr$ in the fed-iterative pruning scheme \\ The further pruning degree in the fed-iterative pruning scheme \\The (averaged) final retention rate or compression ratio of the model\\The adaptive discount factor pruning in the fed-iterative pruning scheme\\Local model update from the $i$-th user at coomunication round $t$\end{tabular}  \\
\bottomrule
\end{tabular}
\end{table*}
\subsection{Federated Learning}
Consider a general FL system consisting of $U$ MTs, in which each client owns a local dataset.
Let $\mathcal D^{\text{train}}_i$, $\mathcal D^{\text{val}}_i$ and $\mathcal D^{\text{test}}_i$ denote the training dataset, validation dataset and testing dataset, held by client $i$, respectively, where $i\in \mathcal{U} = \{1, 2,\ldots, U\}$.
Formally, this FL task is formulated:
\begin{equation}
\boldsymbol{\theta}^{\star} = \mathop{\arg\min}_{\boldsymbol{\theta}}\sum_{i\in \mathcal{U}}p_{i}F(\boldsymbol{\theta}, \mathcal D^{\text{train}}_{i}),
\end{equation}
where $F(\cdot)$ is the loss function and $p_{i} = \vert \mathcal D^{\text{train}}_i\vert/\vert \mathcal D^{\text{train}}\vert\geq 0$ with $\sum_{i\in \mathcal{U}}{p_{i}}=1$; $\vert \mathcal D^{\text{train}}_{i}\vert $ is the size of training dataset $\mathcal D^{\text{train}}_i$ and $\vert \mathcal D^{\text{train}}\vert = \sum_{i\in \mathcal{U}}{\vert \mathcal D_{i}^{\text{train}}\vert}$ is the total size of training datasets, respectively.
For the $i$-th client, the updating process to learn a local model over training data $\mathcal D^{\text{train}}$ can be expressed as:
\begin{equation}
\boldsymbol{\theta}_{i}=\boldsymbol{\theta}_{i}^{t}-\alpha\nabla F(\boldsymbol{\theta}_{i}, \mathcal D^{\text{train}}_{i}).
\end{equation}

Generally, the loss function $F(\cdot)$ is given by the empirical risk and has the same expression across MTs.
Then, the $U$ associated MTs learn a global model $\boldsymbol{\theta}$ over training data $\mathcal D^{\text{train}}_{i}$, $\forall i \in \mathcal{U}$. Given the global model parameter $\boldsymbol{\theta}$ from the server by aggregation, each client $i$ can validate the model based on its validation dataset $\mathcal D^{\text{val}}_{i}$ and obtain the validation scores:
\begin{equation}
S_{i}=s_{val}(\boldsymbol{\theta}, \mathcal D^{\text{val}}_{i}),
\end{equation}
where $s_{val}$ is the validate function to calculate the number of correct predictions using the trained model $\boldsymbol{\theta}$.
\begin{figure*}[htbp]
\begin{center}
\includegraphics[width=1\textwidth]{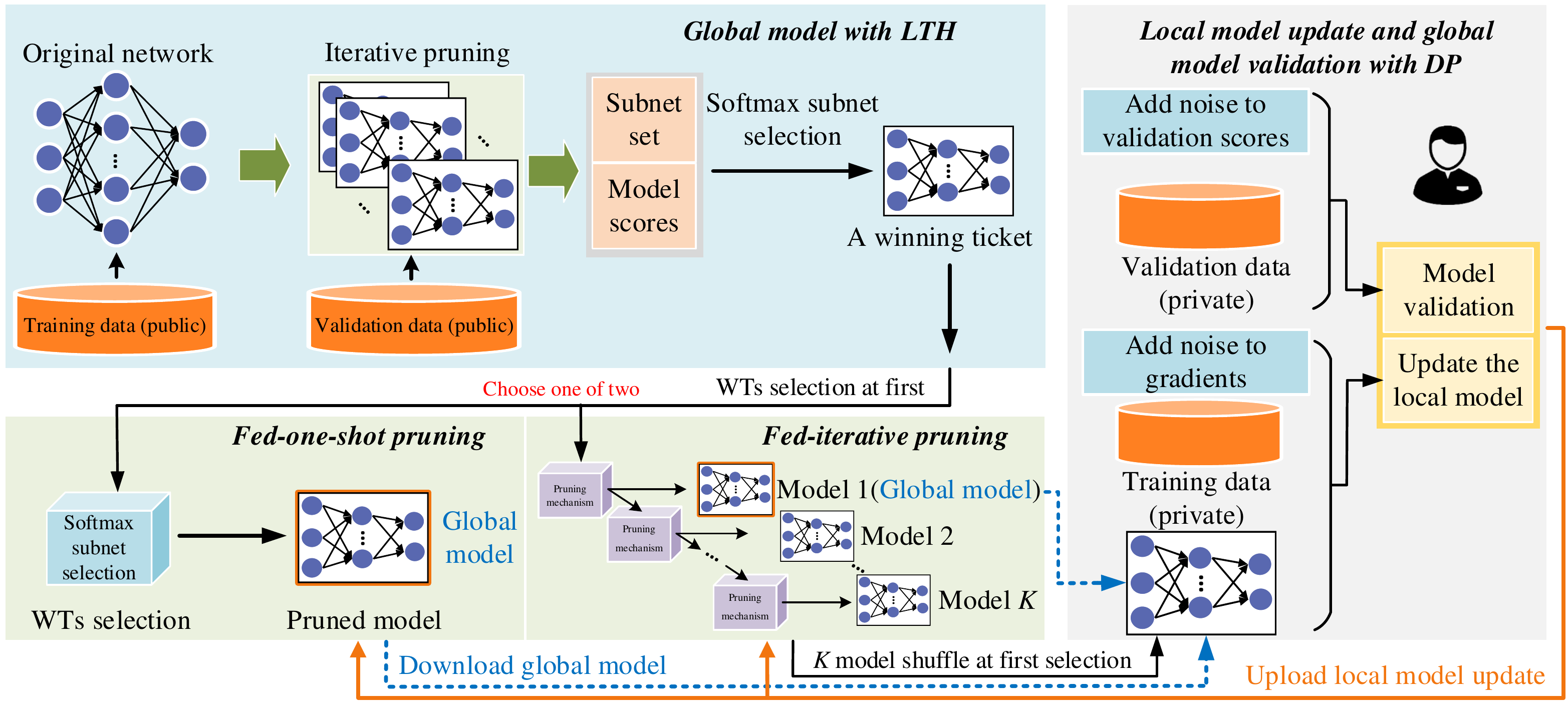}
\end{center}
\caption{ \small An overview of Fed-LTP from a client perspective with three components, where the client is the MT in the edge computing system. On the server side: 1) LTH is used to prune and train the original network to generate multiple WTs and a WT is selected as the candidate global model; 2) With the WT-broadcasting mechanism, the candidate global model is either maintained at the current degree of pruning by fed-one-shot pruning or subject to further pruning by fed-iterative pruning. On the client side: 3) The local model is trained with DP on the private data of each client and the Laplace mechanism is employed for model validation. 
}
\label{FedLTP}
\end{figure*}
\subsection{Differential Privacy}
DP \cite{Dwork2014the} is a rigorous privacy notion for measuring privacy risk. 
In this paper, we consider two relaxed versions of DP definitions: Rényi DP (RDP)~\cite{mironov2017renyi} and zero-concentrated DP (zCDP)~\cite{Bun2016Concentrated}.
\begin{definition}{\textbf{$(\epsilon, \delta)$-DP}} \cite{Dwork2014the}\textbf{.} Given privacy parameters $\epsilon > 0$ and $0 \le \delta < 1$, a randomized mechanism $\mathcal{M}$ satisfies $(\epsilon, \delta)$-DP if for any pair of adjacent datasets $\mathcal D$, $\mathcal D^{\prime}$ , and any subset of outputs $O \subseteq range(\mathcal{M})$:
\begin{equation}
\operatorname{Pr}[\mathcal{M}(\mathcal D) \in O] \leq e^{\epsilon} \operatorname{Pr}\left[\mathcal{M}\left(\mathcal D^{\prime}\right) \in O\right]+\delta.
\end{equation}
Where adjacent datasets are constructed by adding or removing any record; $(\epsilon, \delta)$-DP is $\epsilon$-DP, or pure DP when $\delta = 0$.
\end{definition}
\begin{definition}\textbf{Instance-level DP for FL} \cite{Hu2021federated}\textbf{.} A randomized algorithm $\mathcal{M}$ is $(\epsilon, \delta)$-DP if for any two adjacent datasets $\mathcal I$, $\mathcal I^{\prime}$ constructed by adding or removing any record in any client's dataset, and every possible subset of outputs $O$:
\begin{equation}
\operatorname{Pr}[\mathcal{M}(\mathcal I) \in O] \leq e^{\epsilon} \operatorname{Pr}\left[\mathcal{M}\left(\mathcal I^{\prime}\right) \in O\right]+\delta.
\end{equation}
\end{definition}
\begin{definition}{\textbf{Rényi DP}}~\cite{mironov2017renyi}\textbf{.} Given a real number $\alpha \in (1, \infty )$ and privacy parameter $\rho \ge 0$, a randomized mechanism $\mathcal{M}$ satisfies $(\alpha, \rho)$-RDP if for any two neighboring datasets $\mathcal D$, $\mathcal D'$ that differ in a single record, the Rényi $\alpha$-divergence between $\mathcal{M}(\mathcal D)$ and $\mathcal{M}(\mathcal D^{\prime})$ satisfies:
\begin{equation}
D_{\alpha}\left[\mathcal{M}(\mathcal D) \| \mathcal{M}\left(\mathcal D^{\prime}\right)\right]:=\frac{1}{\alpha-1} \log \mathbb{E}\left[\left(\frac{\mathcal{M}(\mathcal D)}{\mathcal{M}\left(\mathcal D^{\prime}\right)}\right)^{\alpha}\right] \leq \rho,
\end{equation}
where the expectation is taken over the output of $\mathcal{M}(\mathcal D^{\prime})$.
\end{definition}

To define $\rho$-zCDP, we first introduce the privacy loss random variable.
For an output $o\in \text{range}(\mathcal{M})$, the privacy loss random variable $Z$ of the mechanism $\mathcal{M}$ is defined as:
\begin{equation}
Z = \log\frac{\Pr\left[\mathcal{M}(\mathcal D)=o\right]}{\Pr\left[\mathcal{M}(\mathcal D')=o\right]}.
\end{equation}
\begin{definition}
\textbf{$\rho$-zCDP}~\cite{Bun2016Concentrated}\textbf{.}
$\rho$-zCDP imposes a bound on the moment generating function of the privacy loss $Z$ and requires it to be concentrated around zero.
Formally, it needs to satisfy:
\begin{equation}
\begin{aligned}
e^{D_{\alpha}(\mathcal{M}(\mathcal D)\|\mathcal{M}(\mathcal D'))} &= \mathbb{E}\left[e^{(\alpha-1)Z}\right] \leq e^{(\alpha-1)\alpha \rho}.
\end{aligned}
\end{equation}
\end{definition}
In this paper, we use the following zCDP composition results.

\begin{lemma}\label{lemma:zcdp}
If $\mathcal M$ satisfies $\epsilon$-differential privacy, then $\mathcal M$ satisfies $\left(\frac{1}{2}\epsilon^{2}\right)$-zCDP \cite{Bun2016Concentrated}.
\end{lemma}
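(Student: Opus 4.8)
The plan is to verify the $\rho$-zCDP condition directly from its definition with $\rho=\tfrac{1}{2}\epsilon^{2}$: I would show that the $\epsilon$-DP hypothesis forces $\mathbb{E}[e^{(\alpha-1)Z}] \le e^{(\alpha-1)\alpha\rho}$ for every $\alpha\in(1,\infty)$, where $Z$ is the privacy loss random variable and the expectation is taken over $o\sim\mathcal{M}(\mathcal D)$ (which is exactly the quantity appearing in the RDP/zCDP definition, since $\mathbb{E}_{o\sim\mathcal{M}(\mathcal D)}[e^{(\alpha-1)Z}]=\mathbb{E}_{o\sim\mathcal{M}(\mathcal D')}[(\mathcal M(\mathcal D)/\mathcal M(\mathcal D'))^{\alpha}]$). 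The first step is to turn $\epsilon$-DP into a pointwise bound on $Z$: the inequality $\Pr[\mathcal{M}(\mathcal D)=o]\le e^{\epsilon}\Pr[\mathcal{M}(\mathcal D')=o]$ gives $Z\le\epsilon$, and because the adjacency relation is symmetric, swapping $\mathcal D$ and $\mathcal D'$ yields $Z\ge-\epsilon$; hence $|Z|\le\epsilon$ almost surely, so $Z$ is a bounded (in particular integrable) random variable with no infinite privacy loss.

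Next I would record the elementary identity $\mathbb{E}_{o\sim\mathcal{M}(\mathcal D)}[e^{-Z}]=\sum_{o}\Pr[\mathcal{M}(\mathcal D)=o]\cdot\frac{\Pr[\mathcal{M}(\mathcal D')=o]}{\Pr[\mathcal{M}(\mathcal D)=o]}=1$, since the right-hand sum is just the total mass of $\mathcal{M}(\mathcal D')$ (replace the sum by an integral over densities in the continuous case). The crux is then to bound the mean $\mu:=\mathbb{E}[Z]$, which equals the KL divergence $D\big(\mathcal{M}(\mathcal D)\,\|\,\mathcal{M}(\mathcal D')\big)$. The naive bound $\mu\le\epsilon$ coming from $Z\le\epsilon$ is too weak; instead I would apply Hoeffding's lemma to the bounded variable $Z\in[-\epsilon,\epsilon]$ (range $2\epsilon$), giving the sub-Gaussian moment bound $\mathbb{E}[e^{s(Z-\mu)}]\le e^{s^{2}\epsilon^{2}/2}$ for all real $s$. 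Taking $s=-1$ and combining with the identity $\mathbb{E}[e^{-Z}]=1$ gives $e^{\mu}\le e^{\epsilon^{2}/2}$, i.e. $\mu\le\tfrac{1}{2}\epsilon^{2}$.

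Finally, taking $s=\alpha-1>0$ in the same Hoeffding bound and inserting $\mu\le\tfrac{1}{2}\epsilon^{2}$ yields
\[
\mathbb{E}[e^{(\alpha-1)Z}]\le e^{(\alpha-1)\mu}\,e^{(\alpha-1)^{2}\epsilon^{2}/2}\le e^{\frac{1}{2}\epsilon^{2}(\alpha-1)\alpha},
\]
which is exactly the $\tfrac{1}{2}\epsilon^{2}$-zCDP condition for all $\alpha\in(1,\infty)$, completing the proof. The main obstacle is the mean-bounding step: obtaining the sharp constant $\tfrac{1}{2}\epsilon^{2}$ on $\mu$ (rather than the loose $\epsilon$) is precisely what makes the final exponent linear in $\alpha$, i.e. of the form $\tfrac12\epsilon^{2}\alpha$; the key trick is to feed $s=-1$ together with the identity $\mathbb{E}[e^{-Z}]=1$ into Hoeffding's lemma. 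Everything else is routine, and one should only take care to phrase steps two and three for general (possibly continuous) output distributions via densities rather than probability masses.
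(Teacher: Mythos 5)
Your proof is correct: the pointwise bound $|Z|\le\epsilon$, the identity $\mathbb{E}_{o\sim\mathcal M(\mathcal D)}[e^{-Z}]=1$, and Hoeffding's lemma applied twice (once with $s=-1$ to get $\mathbb{E}[Z]\le\tfrac12\epsilon^2$, once with $s=\alpha-1$) combine exactly as you describe to give $\mathbb{E}[e^{(\alpha-1)Z}]\le e^{\frac12\epsilon^2(\alpha-1)\alpha}$ for all $\alpha>1$. The paper does not prove this lemma at all---it simply cites Bun and Steinke---and your argument is precisely the standard one from that reference, so there is nothing to compare beyond noting that you have correctly reconstructed it.
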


\section{The proposed Approach}
In this section, we give a detailed description of Fed-LTP. The overall framework is summarized in Algorithm \ref{algo_FedLTP}, and its workflow from a client perspective is illustrated in Fig. \ref{FedLTP}. 




%



\subsection{Global Model Generation with LTH}
Inspired by DPLTM \cite{gondara2020differentially}, we use LTH to generate a candidate global model, which contains two major procedures (Algorithm \ref{algo_Global_model_with_LTH}).

\emph{1) WTs generation on public dataset with LTH.} This process is identical to LTH, with the only difference being that we use the public dataset for WTs generation.

To find a lighter-weight network with higher test accuracy, an iterative pruning method is used when generating WTs. During the $j$-th pruning, with the pruning degree to $Pr$, a mask vector ${\boldsymbol{M}}\left( {{{\boldsymbol{w}}_j}} \right)$ is set to zero if model $\boldsymbol{w}_{j}$ is pruned or one if unpruned. Let $\boldsymbol{w}_{j,j}$ denote the weight in the $j$-th layer of model $\boldsymbol{w}_{j}$. Note that for the $j$-th layer, the operation is defined as:
\begin{equation}
\boldsymbol{M}\left( \boldsymbol{w}_{j,j}  \right) = \left\{ \begin{array}{l}
\begin{array}{*{20}{c}}
1&{if\left | w  \right | > Pr \left | w _{max}  \right | }
\end{array}\\
\begin{array}{*{20}{c}}
0&{otherwise}
\end{array}
\end{array} \right.,
\label{eq:weight-based_mask}
\end{equation}
where $w  \in \boldsymbol{w}_{j,j}$ and $\left |w _{max} \right |$ denotes the largest value of $\boldsymbol{w}_{j,j}$. Therefore, the mask matrix for model $\boldsymbol{w}_{j}$ is constructed by applying \eqref{eq:weight-based_mask} to each layer.

\emph{2) WTs selection with softmax function.} Unlike the selection method in DPLTM \cite{gondara2020differentially}, for the winning ticket selection, there are $M$ alternative tickets, and each ticket is given a score $V(\boldsymbol{w}_{j} , \mathcal{D}_{{pub}})$ that equals to the correct number of samples for inference. To adjust the trade-off between the accuracy and the degree of network pruning, we use the softmax function~\cite{jang2016categorical} 
over the preference value $V(\boldsymbol{w}_{j} , \mathcal{D}_{{pub}})$ also known as the score to select WT, which ensures that all WTs are explored:
\begin{equation}
\small
{P_j} = \sigma \left( {V\left( \boldsymbol{w}_{j},\mathcal{D}_{pub} \right)} \right) = \frac{{{e^{V\left( \boldsymbol{w}_{j},\mathcal{D}_{pub} \right)}}}}{{\sum\limits_{j = 1}^M {{e^{V\left( \boldsymbol{w}_{j},\mathcal{D}_{pub} \right)}}} }},\label{eq:selectWT}
\end{equation}
where $j$ denotes one of possibly many winning tickets and $P_j$ is its corresponding probability. It is worth noting that we use LTH to generate WTs on the public dataset, and each client performs the training process with DP on their private data. Different from DPLTM \cite{gondara2020differentially}, our method does not have the privacy budget as we do not use the private data while generating the model structure.
\vspace{-0.25cm}
\begin{algorithm}
\small
\SetKwData{Left}{left}\SetKwData{This}{this}\SetKwData{Up}{up} \SetKwFunction{Union}{Union}\SetKwFunction{FindCompress}{FindCompress} \SetKwInOut{Input}{Input}\SetKwInOut{Output}{Output}
\Input{Public dataset $\mathcal{D}_{pub}$, the number of winning tickets $M$, the number of iterations for training winning tickets $k$, pruning ratio $Pr$.}
\Output{The selected winning ticket $j$.}
      \BlankLine
    \emph{\textbf{Procedure 1: WTs generation on public dataset with LTH} }\\
     \For{$j=0$ \KwTo $M$-1}{
        Randomly initialize a neural network $f\left ( \boldsymbol{w} _{0} \right ).$ \\
        Train the network for $k$ iterations on $\mathcal{D}_{pub}$ to obtain $\boldsymbol{w} _{j}$ for $j$-th winning ticket.\\
        Prune $Pr\%$ of the parameters in $\boldsymbol{w} _{j}$, creating a mask $\boldsymbol{M}$ by \eqref{eq:weight-based_mask}.\\
        Reset the remaining parameters to their values in $\boldsymbol{w} _{0}$, generating the winning ticket and a pruned model $f\left (  \boldsymbol{M} \odot \boldsymbol{w} _{0} \right ) $.
     }
     Store the pruned model, and the score $V(\boldsymbol{w}_{j}, \mathcal{D}_{pub})$.
    \BlankLine
    \emph{\textbf{Procedure 2: WTs selection with softmax function}}\\
    Select a winning ticket $j$ with probability $P_j$ by \eqref{eq:selectWT} with the score.\\
  \caption{ \small Global model generation with LTH}
  \label{algo_Global_model_with_LTH}
\end{algorithm}
\begin{algorithm*} 
\small
\SetKwData{Left}{left}\SetKwData{This}{this}\SetKwData{Up}{up} \SetKwFunction{Union}{Union}\SetKwFunction{FindCompress}{FindCompress} \SetKwInOut{Input}{Input}\SetKwInOut{Output}{Output}
\Input{The number of communication rounds $T$, local update period $\tau$, the size of selected MTs per round $K$, the clipping threshold $C$, learning rates $\eta$ and total privacy budget $\epsilon$.} 
\Output{The best global model $\boldsymbol{\hat{\theta}}$ utilizing model validation with the Laplace mechanism.}
\BlankLine 
\begin{multicols}{2}
	 \BlankLine 
	 \textbf{Server executes:}\\
       \textbf{Initialize model process:}
            Firstly, select an initially pruned models $\boldsymbol{\theta}_0$ by \emph{\textbf{Global model generation with LTH}} 
            (Algorithm \ref{algo_Global_model_with_LTH}). Secondly, get the models $(\boldsymbol{\hat{\theta}}_0 ^{t} ,...,\boldsymbol{\hat{\theta}}_{K-1}^{t})$ by \textbf{Server-side model pruning}.\\
      \For{$t=1$ \KwTo $T$}{
        Sample $K$ clients/MTs uniformly at random without replacement.\\
        Give the models by \textbf{Client model selection} to selected MTs.\\
          \For{each selected client $i$ \textbf{in parallel}}
          {$\Delta _{i}^{t} \leftarrow \textbf{ClientUpdate($\boldsymbol{\hat{\theta}}_{i}^{t}, \Delta^ {t}$)}$}
        $\boldsymbol{\hat{\theta}}^{t+1}  \gets  \boldsymbol{\hat{\theta}} ^{t} + \frac {1}{K} \sum_{i\in \mathcal{K} } \Delta _{i}^{t}$  \\
     }
     \textbf{ClientUpdate($\boldsymbol{\hat{\theta}}_{i}^{t}, \Delta^ {t}$):}
     $\boldsymbol{\theta}_i ^{t,0}  \gets  \boldsymbol{\hat{\theta}}_{i}^{t} $\\
     \For{s=0 to $\tau - 1$}{
        Compute a mini-batch stochastic gradient $\boldsymbol{g}_i ^{t,s}$\\
        $\boldsymbol{\theta}_i ^{t,s+1} \gets  \boldsymbol{\theta}_{i} ^{t,s} - \eta (\boldsymbol{g}_i ^{t,s} \times \min(1, C/\left \| \boldsymbol{g}_i ^{t,s}   \right \|_{2} )+ \boldsymbol{b}_i ^{t,s})$\\
        where $\boldsymbol{b}_i ^{t,s} \sim  \mathcal{N} (0, (\sigma ^{2}C^{2})\cdot \mathbf{\boldsymbol{\mathit{I}}}_d )$.
     }
     $\Delta _{i}^{t}  \leftarrow  \boldsymbol{\theta}_i ^{t,\tau } - \boldsymbol{\hat{\theta}}_{i}^{t}$\\
      \vspace{0.2in}
     \textbf{Return} $\Delta _{i}^{t}$\\
     \textbf{Server-side model pruning:}\\
     \If{ fed-iterative pruning}  
     {\For{the model index i, from 0 to $K - 1$}{
        Compute $\boldsymbol{\hat{\theta}}^{t}$ by \eqref{eq:iterative_pr}.
     }
     \textbf{Return} $(\boldsymbol{\hat{\theta}}_0 ^{t} ,...,\boldsymbol{\hat{\theta}}_{K-1}^{t})$\\
    }
    \If{ fed-one-shot pruning}{
        Generate $\boldsymbol{\hat{\theta}}^{t}$ by \eqref{eq:one-shot-global}.\\
    \textbf{Return} $\boldsymbol{\hat{\theta}}^{t} $\\
    }
    \textbf{Client model selection:}\\
    \If{ fed-iterative pruning}
     {Generate further pruned models $(\boldsymbol{\hat{\theta}}_0 ^{t} ,...,\boldsymbol{\hat{\theta}}_{K-1}^{t})$.\\
     Model shuffle at first selection for client $i$, and the \\models of different MTs satisfy \eqref{eq:iterative_pr_user_model}.
    }
    \If{ fed-one-shot pruning}{
        Broadcast one global model $\boldsymbol{\hat{\theta}}^{t}$ for all selected MTs.
    }
          \caption{ \small Fed-LTP algorithm}
 	 	  \label{algo_FedLTP} 
\end{multicols}
     \end{algorithm*}
\subsection{Server-side WT-broadcasting Mechanism}
We propose a server-side WT-broadcasting mechanism with two major steps (Algorithm \ref{algo_FedLTP}). It constrains the local and global models within the same model class to stabilize global model aggregation. 

\emph{1) Server-side model pruning.} We present two different pruning strategies for model pruning as below, and the final retention rate is denoted as $p$.

\textbf{Fed-one-shot pruning.} Following the convention of LTH \cite{Frankle2019the} and \cite{Luo2021scalable}, let $i \in \mathcal{K}=\{1,2,...,K\}$ denote a selected client. According to Algorithm \ref{algo_Global_model_with_LTH}, weight-based pruning is conducted when generating WTs, and the WT selected by the softmax function can be used as the global model:
\begin{equation}
\boldsymbol{\hat{\theta}}^{t} =  \boldsymbol{\theta}_0 = \boldsymbol{w}_j,
\label{eq:one-shot-global}
\end{equation}
which results in $p=1-Pr$.

\textbf{Fed-iterative pruning.} To achieve 
a better balance among test accuracy, communication cost, and resource overhead of edge devices than fed-one-shot pruning, fed-iterative pruning further prunes the global model based on LTH. Then we generate models with different pruning degrees following HeteroFL \cite{diao2020heterofl}, where local models have similar architecture
but can shrink their model size within the same global model class. 
\begin{figure}[htbp]
\begin{center}
\includegraphics[width=0.49\textwidth]{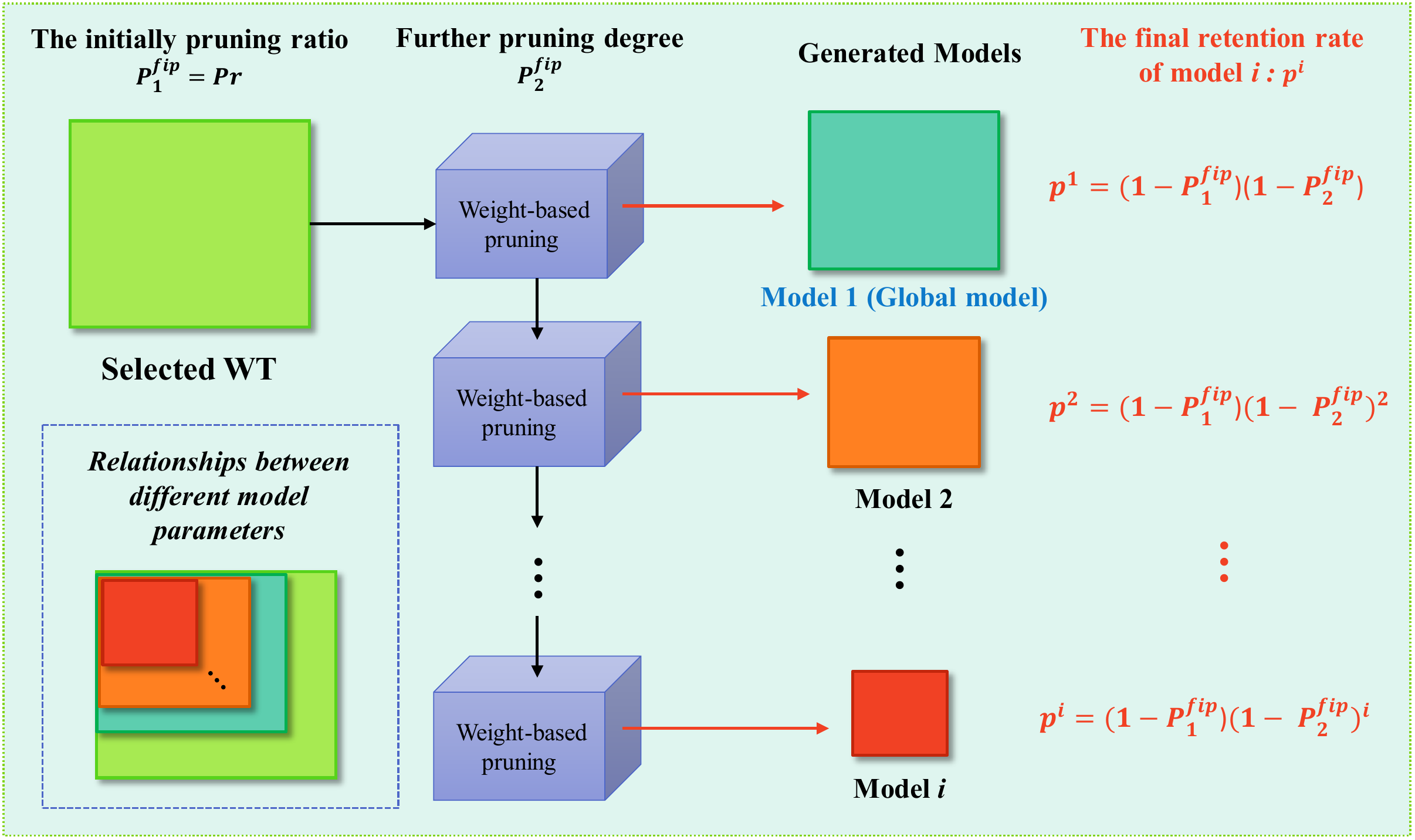}
\end{center}
\vspace{-0.25cm}
\caption{ \small An overview of fed-iterative pruning scheme.}
\label{fip}
\end{figure}

As shown in Figure \ref{fip}, $p$ is decided by two pruning factors: (i). the initially selected WT's retention rate $P_1^{fip} = Pr$; (ii). the further pruning degree $P_2^{fip}$.
In the further pruning stage, we repeatedly apply the weight-based pruning method in ~\eqref{eq:weight-based_mask} to generate heterogeneous models by iterative pruning while the number of iterations is the same as the number of MTs selected each time.
Formally,
the model parameter $\boldsymbol{\hat{\theta}}_i^{t}$ of model $i$ at the $t$-th training round is given by:
\begin{equation}\small
\left\{ \begin{array}{l}
\boldsymbol{\hat{\theta}}_i^{t} = \boldsymbol{\hat{\theta}}_{i-1}^{t} \odot  \boldsymbol{M}\left( \boldsymbol{\hat{\theta}}_{i-1}^{t} \right)\\
\boldsymbol{\hat{\theta}} _1^t = \boldsymbol{\theta} _0^t \odot \boldsymbol{M}\left( \boldsymbol{\theta} _0^t \right)
\end{array} \right.,
\label{eq:iterative_pr}
\end{equation}
where $\odot$ denotes the element-wise product.
Finally, the global model is produced by:
\begin{equation}\small
\boldsymbol{\hat{\theta}}^{t} = \boldsymbol{\theta}_0^{t} \odot \boldsymbol{M}\left( \boldsymbol{\theta}_0^{t} \right),
\end{equation}

where $\boldsymbol{\theta}_0^{t} = \boldsymbol{\theta} _0 = \boldsymbol{w}_j$ for global training round $t=0$. Note that the aggregated parameters are averaged over the unpruned parameters in each participated client, and the model structure is consistent with the client model with the highest retention rate.

\emph{2) Client model selection.} In the fed-one-shot pruning strategy, only one global model $\boldsymbol{\hat{\theta}}^{t}$ is broadcast to all MTs, whereas in the fed-iterative pruning scheme, local model $\boldsymbol{\hat{\theta}}_i$ is obtained by model shuffle (without replacement) at first selection for client $i$. It should be noted that in the subsequent training epoch, the MTs who have been selected do not participate in the model shuffle, and the models of other MTs selected at first are different from those of MTs who have been selected, that is:
\begin{equation}\label{eq:iterative_pr_user_model}
\small
\boldsymbol{M}\left( {\boldsymbol{\hat{\theta}}_i^{t}} \right) \ne {\boldsymbol{M}}\left( \boldsymbol{\hat{\theta}}_j^{t} \right), \forall i,j \in \mathcal{K} = \left\{ 1,...,K \right\}.
\end{equation}

To show the more details between the final retention rate $p$ and different client models by \eqref{eq:iterative_pr} in the fed-iterative pruning, the adaptive discount factor $\gamma ^i$ and $p ^i$ of client $i$ are given by:
\begin{equation}\label{eq:iterative_factor_user_model}
\small
\gamma ^i = (1- P_2^{fip})^i, \text{and} ~~~ p ^i = \gamma ^i(1- P_1^{fip}),
\end{equation}
where $P_1^{fip}=Pr$ and $P_2^{fip}$ also represents the degree of variation across local models.
Due to the different pruning degrees across MTs, $p$ is set to the average value of $p^i$: 
\begin{equation}\small
    p=\frac{1}{K} \sum_{i=1}^K p^i
\end{equation}

\subsection{Local Model Training with DP}
After the participated MTs download the initially pruned model from the server side, and then the MTs start the local model training with DP protection on their own private data.
Due to the initial model having been pruned in the server-side WT-broadcasting mechanism, at $t$-th communication round, each participated client $i$ performs local model iteration updates, at $s$-th local iteration step, a mini-batch stochastic gradient $\boldsymbol{g}_i ^{t,s}$ is calculated on a mini-batch private data. And then we clip $\boldsymbol{g}_i ^{t,s}$ and add DP random noise $\boldsymbol{b}_i ^{t,s}$ into it, where the noise is satisfied by the Gaussian distribution $\mathcal{N} (0, (\sigma ^{2}C^{2})\cdot \mathbf{\boldsymbol{\mathit{I}}}_d )$. Thus, the local iteration is performed as:
\begin{equation}
    \boldsymbol{\theta}_i ^{t,s+1} \gets  \boldsymbol{\theta}_{i} ^{t,s} - \eta (\boldsymbol{g}_i ^{t,s} \times \min(1, C/\left \| \boldsymbol{g}_i ^{t,s} \right \|_{2} )+ \boldsymbol{b}_i ^{t,s}). \nonumber
\end{equation}

After finishing the $\tau$ local iteration steps, we calculate the local model update:
\begin{equation}
    \Delta _{i}^{t}  \leftarrow  \boldsymbol{\theta}_i ^{t,\tau } - \boldsymbol{\hat{\theta}}_{i}^{t}. \nonumber
\end{equation}
Finally, each participated client $i$ sends the local model update $\Delta _{i}^{t}$ to the server. The more details are also summarized in lines 13-19 of Algorithm \ref{algo_FedLTP}.
Note that the local model training process in private data also means performing knowledge transfer due to the local model being trained in the public data from the server side. Below, we present some discussion about the  
knowledge transfer between public and private data.


\textbf{Knowledge transfer between public and private data.} Since the use of public data is common in DP literature~\cite{wang2020differentially,papernot2018scalable,Luo2021scalable,li2022private}, following the convention of \cite{jiang2022model,Luo2021scalable,li2022private}, we utilize the labeled public data and the computational power of the server instead of the limited resources of the edge devices. In practice, we generate several pre-trained WTs on the public data, and then save the architecture of selected WT as the global model and reinitialize the values of unpruned parameters. Note that, unlike DPLTM \cite{gondara2020differentially}, Fed-LTP does not require an additional privacy budget. 

\subsection{Model Validation with Laplace Mechanism}
After all local model updates of the participated MTs are uploaded to the server, to select the best global model and prevent over-fitting, the server calculates the validation score based on the local validation datasets after the local model update with DP in each communication round. However,
traditional schemes usually tune the hyperparameters using grid-search~\cite{Hu2021federated}, which violates the rule of the real system due to observing the testing privacy data. 
Thus, testing private data is necessary to be protected for data privacy in the real system, thereby also consuming the privacy budget. In this paper, the Laplace mechanism is adopted to achieve the DP guarantee during the validation process.

At the beginning of each communication round, the server obtains a global model $\boldsymbol{\hat{\theta}}^{t}$ by aggregation.
Then, it sends this global model to all MTs.
Each client subsequently validates the received model based on its local validation dataset to obtain scores $S_{i}^{t}$ and sends it to the server.
To protect the privacy of local validation dataset, each client needs to perturb the validation scores by:
\begin{equation}
\widetilde{S}_{i}^{t} = S_{i}^{t} + \text{Lap} \left(\Delta_{1}(s_{val})\lambda_{val}\right),
\end{equation}
where $\Delta_{1}(s_{val})$ is the DP sensitivity and $\lambda_{val}$ is the parameter for the Laplace distribution. It is clear that the maximum change of the score caused by a single sample is bounded as $\Delta_{1}(s_{val}) = 1$.
The server obtains the validation scores for the global model as:
\begin{equation}\label{eq:val_score}
\small
S^{t}=\sum_{i\in =1}^{K}\widetilde{S}_{i}^{t}=\sum_{i=1}^{K}S_{i}^{t} + \sum_{i=1}^{K}\text{Lap}( \Delta_{1}(s_{val})\lambda_{val}),
\end{equation}
after receiving all MTs' scores.
Finally, when the FL training is terminated, the server selects the model with the highest validation score as the global model, that is:
\begin{equation}
\boldsymbol{\hat{\theta}}^{f} = \boldsymbol{\hat{\theta}}^{\mathop{\arg\max}\limits_{t\in[T]}S^{t}}.
\end{equation}

From~\eqref{eq:val_score}, as the number of participants in each round gets larger, the system can obtain a more reliable validation score as the variance of the aggregated noise is smaller.


\section{Privacy Analysis}
In this section,  we provide a tight privacy analysis based on zCDP for calculating the privacy loss/budget $\epsilon$ as communication round $T$ increases.

As shown in Figure \ref{FedLTP} and Algorithm \ref{algo_FedLTP}, the privacy budget $\epsilon$ with a given $\delta$ can be divided into two parts: local model training with DP and the DP-based model validation.
\begin{theorem}
The accumulated privacy loss of the proposed algorithm after the $t$-th communication round can be expressed as
\begin{equation}\label{eq:Composition_RDP}
\small
\begin{aligned}
\epsilon^{t} &= \rho_{s}^{t} + \frac{(t+1)\alpha(\alpha-1)}{2\lambda_{val}^{2}}  + \frac{\log\left(\frac{1}{\delta}\right)-\log(\alpha)}{\alpha-1}+\log\left(1-\frac{1}{\alpha}\right),
\end{aligned}
\end{equation}
where
\begin{equation}\label{eq:Renyi_divergence}
\small
\begin{aligned}
\rho_{s}^{t} &= \frac{t+1}{(\alpha-1)}\log {\mathbb{E}_{z\sim \mu_{0}(z)}\left[\left(1-\tilde{q}+\tilde{q}\frac{ \mu_{1}(z)}{\mu_{0}(z)}\right)^{\alpha}\right]}.
\end{aligned}
\end{equation}
In \eqref{eq:Renyi_divergence}, $\mu_{0}(z) = \mathcal{N}(0, \sigma^2)$ denotes a Gaussian probability
density function (PDF),
$\mu_{1}(z) = \tilde{q}\mathcal{N}(1, \sigma^2)+(1-\tilde{q})\mathcal{N}(0, \sigma^2)$ is the PDF of a mixture of two Gaussian distributions, and $\tilde{q}$ is the sample rate of local gradient (mini-batch) in local model training.
\end{theorem}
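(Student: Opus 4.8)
The plan is to split the total privacy loss into two per-round mechanisms whose outputs are released at every communication round --- the differentially private local training of lines 13--19 of Algorithm \ref{algo_FedLTP}, and the Laplace-perturbed validation aggregate of \eqref{eq:val_score} --- to bound the R\'enyi differential privacy (RDP) of each at a common order $\alpha$, to add these bounds by the additivity of RDP across the $t+1$ rounds $0,1,\dots,t$, and finally to apply the tight RDP-to-$(\epsilon,\delta)$-DP conversion so as to read off \eqref{eq:Composition_RDP}.

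First, for the local training I would model each round's released update as a \emph{subsampled Gaussian mechanism}: per-sample clipping bounds the $\ell_2$-sensitivity of the aggregated mini-batch gradient by $C$, the additive noise is $\mathcal{N}(0,\sigma^2C^2 I)$, and the mini-batch is drawn at rate $\tilde{q}$. Rescaling by $C$ normalizes the sensitivity to one, so the relevant pair of output laws is the clean Gaussian $\mu_0(z)=\mathcal{N}(0,\sigma^2)$ and the sampling-induced mixture of $\mu_0$ with the unit-shifted Gaussian; the $\alpha$-th moment of their density ratio is exactly the expectation inside \eqref{eq:Renyi_divergence}, giving the single-round R\'enyi $\alpha$-divergence $\frac{1}{\alpha-1}\log\mathbb{E}_{z\sim\mu_0}[\cdots]$. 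Because RDP at a fixed order composes additively, accumulating this per-round bound across the rounds $0,1,\dots,t$ multiplies it by $t+1$ and yields precisely $\rho_s^t$.

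Next, for the validation step I would observe that the Laplace output in \eqref{eq:val_score} has $\ell_1$-sensitivity $\Delta_1(s_{val})=1$ and scale $\lambda_{val}$, hence satisfies $(1/\lambda_{val})$-pure DP in each round. Invoking Lemma \ref{lemma:zcdp} converts this to $\tfrac{1}{2\lambda_{val}^2}$-zCDP, and the moment bound in the Definition of $\rho$-zCDP then supplies the RDP contribution $\tfrac{\alpha(\alpha-1)}{2\lambda_{val}^2}$ at order $\alpha$; composing over the $t+1$ rounds produces the $\tfrac{(t+1)\alpha(\alpha-1)}{2\lambda_{val}^2}$ term. Adding the two RDP contributions at the shared order $\alpha$ and applying the tight conversion --- under which an $(\alpha,\rho)$-RDP guarantee implies $(\epsilon,\delta)$-DP with $\epsilon=\rho+\frac{\log(1/\delta)-\log\alpha}{\alpha-1}+\log(1-\tfrac{1}{\alpha})$ --- splices in the remaining three terms and completes \eqref{eq:Composition_RDP}.

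The main obstacle will be rigorously establishing the subsampled-Gaussian R\'enyi divergence that defines $\rho_s^t$: the privacy amplification by subsampling has to be carried out at the level of the $\alpha$-th moment of the mixture density ratio, which is the delicate, technically heaviest step (the classical moments-accountant analysis), and one must also verify that the add/remove neighboring relation yields the mixture structure appearing in \eqref{eq:Renyi_divergence}. Everything else --- the sensitivity bookkeeping, the pure-DP-to-zCDP passage via Lemma \ref{lemma:zcdp}, the additivity of RDP across rounds and across the two heterogeneous (Gaussian and Laplace) mechanisms, and the closing conversion --- is comparatively mechanical once the common order $\alpha$ has been fixed.
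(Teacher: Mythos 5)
Your proposal matches the paper's proof essentially step for step: the same decomposition into the subsampled-Gaussian training mechanism (whose per-round R\'enyi divergence, cited from the moments-accountant/RDP analysis, is composed over the $t+1$ rounds to give $\rho_s^t$) and the Laplace validation mechanism (converted from $\tfrac{1}{\lambda_{val}}$-pure DP to zCDP via Lemma~\ref{lemma:zcdp} and then to the $\tfrac{(t+1)\alpha(\alpha-1)}{2\lambda_{val}^2}$ RDP term), followed by the same tight RDP-to-$(\epsilon,\delta)$ conversion. The only difference is that the paper simply cites the subsampled-Gaussian divergence bound from prior work rather than deriving it, which is the step you correctly identify as the technically heavy one.
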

\begin{proof}
According to references~\cite{mironov2017renyi} and~\cite{opacus}, the privacy loss can be given by:
\begin{equation}\label{eq:RDP}
\small
\begin{aligned}
\epsilon^{t} = \rho^{t} + \frac{\log\left(\frac{1}{\delta}\right)-\log(\alpha)}{\alpha-1}+\log\left(1-\frac{1}{\alpha}\right),
\end{aligned}
\end{equation}
where $\rho^{t} = \rho_{s}^{t}+\rho_{v}^{t}$ is the R{\'e}nyi $\alpha$-divergence, $\rho_{s}^{t}$ and $\rho_{v}^{t}$ is caused by the local model training with DP and the DP-based model validation, respectively. Specifically, based on~\cite{Abadi2016Deep} and~\cite{mironov2017renyi}, we can calculate $\rho_{s}^{t}$ as~\eqref{eq:Renyi_divergence}, where we use the R{\'e}nyi distance to estimate the privacy loss.
We denote $\mathcal M_{val}$ by the random mechanism used in model validation.
According to~\textbf{Lemma} \ref{lemma:zcdp} for the validation process, if $\mathcal M_{val}$ satisfies $\epsilon_{val}$-DP, it also satisfies $\frac{1}{2}\epsilon_{val}^{2}$-zCDP.
Consequently, it holds that:
\begin{equation}\label{eq:zcdp}
\small
\begin{aligned}
D_{\alpha}(\mathcal M_{val}(\mathcal{D})\|\mathcal M_{val}(\mathcal{D}')) &\leq \frac{1}{2}\alpha\epsilon_{val}^{2}(\alpha-1) = \frac{\alpha(\alpha-1)}{2\lambda_{val}^{2}}.
\end{aligned}
\end{equation}
Then, via considering (t+1) communication rounds, we can obtain 
\begin{equation}
\small
\begin{aligned}
\rho_{v}^{t} \leq \frac{(t+1)\alpha(\alpha-1)}{2\lambda_{val}^{2}}.
\end{aligned}
\end{equation}
Based on~\eqref{eq:RDP} and~\eqref{eq:zcdp}, the privacy loss is accumulated as~\eqref{eq:Composition_RDP}.
\end{proof}
During the training process, the cumulative privacy loss is updated at each epoch, and once the cumulative privacy loss exceeds the fixed privacy budget $\epsilon$, the training process is terminated.
To achieve an expected training time with a given total privacy budget, we can determine the values of hyperparameters for these schedules before training.
Note that, for the validation-based schedule, the additional privacy cost needs to be taken into account due to the access to the validation dataset.

\section{EXPERIMENTAL EVALUATION}
The goal of this section is to evaluate the performance of Fed-LTP with different final retention rates on popular benchmark datasets and compare it with other baseline methods to demonstrate the superiority of our framework.
\subsection{Experimental Setup}
\textbf{Baselines.} To evaluate the performance of Fed-LTP, we compare it with several baseline methods: 1) DP-Fed: this baseline adds instance-level DP to Fed-Avg \cite{McMahan2017Communication}; 2) Fed-SPA~\cite{Hu2021federated}: this baseline integrates random sparsification with gradient perturbation, and uses acceleration technique to improve the convergence speed.

\noindent
\textbf{Datasets and Data Partition.} 
We evaluate Fed-LTP on four datasets: MNIST \cite{lecun1998gradient},  FEMNIST \cite{caldas2018leaf}, CIFAR-10 \cite{krizhevsky2009learning}, and Fashion-MNIST \cite{xiao2017fashion}, where FEMNIST and CIFAR-10 are regarded as the public data.
Two experimental groups are presented with different private data: Fashion-MNIST
and MNIST.
Note that the implementation details and results on the \textbf{Fashion-MNIST private data} are presented in this paper. Meanwhile, detailed experimental evaluation on the MNIST private data is presented in the \textbf{Appendix} \ref{private_mnist}.
We consider two different settings for all algorithms: both identical data distributed (IID) and non-identical data distributed (non-IID) settings across federated clients, such as MTs, where non-IID means the heterogeneous data distribution of local MTs. It can make the training of the global model more difficult.
We partition the training data according to a Dirichlet distribution \textbf{Dir($\alpha$)} for each client \cite{hsu2019measuring} and generate the corresponding validation and test data for each client following the same distribution where $\alpha > 0$ is a concentration parameter controlling the uniformity among MTs. For the non-IID setting, we set $\alpha=1$, where MTs may possess samples of different numbers and classes chosen at random.


\noindent
\textbf{Implementation Details.}\ 
The number of MTs is $U=50$, and the server randomly selects a set of MTs with a sampling ratio of the MTs $q=0.1$ to participate in the training for all experiments. We set the privacy failure probability $\delta = 10^{-3}$  and the number of local iterations $\tau=300$, with a fixed clipping threshold $C=10$ and a noise multiplier $\sigma=1.4$ for all experiments. For the local optimizer on MTs, we use the momentum SGD and set the local momentum coefficient to $0.5$, while set the learning rate $\eta$ as $0.01$ with a decay rate $0.99$ for FL process. Meanwhile, the learning rate for generating the WTs is set to be $1.2\times 10^{-3} $. $P_2^{fip}$ is set to $0.1$ in the fed-iterative pruning scheme for all experiments. For the Fashion-MNIST private data, a CNN model is adopted and the number of communication rounds $T=100$ and Batch size $B=15$. More implementation details are presented in \textbf{Appendix} \ref{imple_detail}.
In addition, the privacy loss of all algorithms is calculated using the API provided in \cite{opacus}. And the communication cost of the baseline methods is calculated in the same way as in Fed-SPA \cite{Hu2021federated}. Each client in Fed-LTP also uses $p \times d \times 32 \times T \times q$ bits, where $d$ is the number of unpruned model parameters to be updated to the server.
\begin{figure*}[ht]
\centering
    \includegraphics[width=0.7\textwidth]{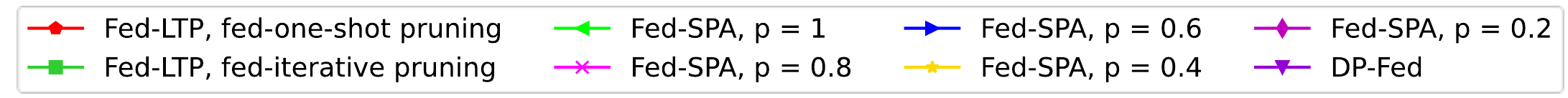}
    \subfigure[FEMNIST]{
    	\includegraphics[width=0.48\textwidth]{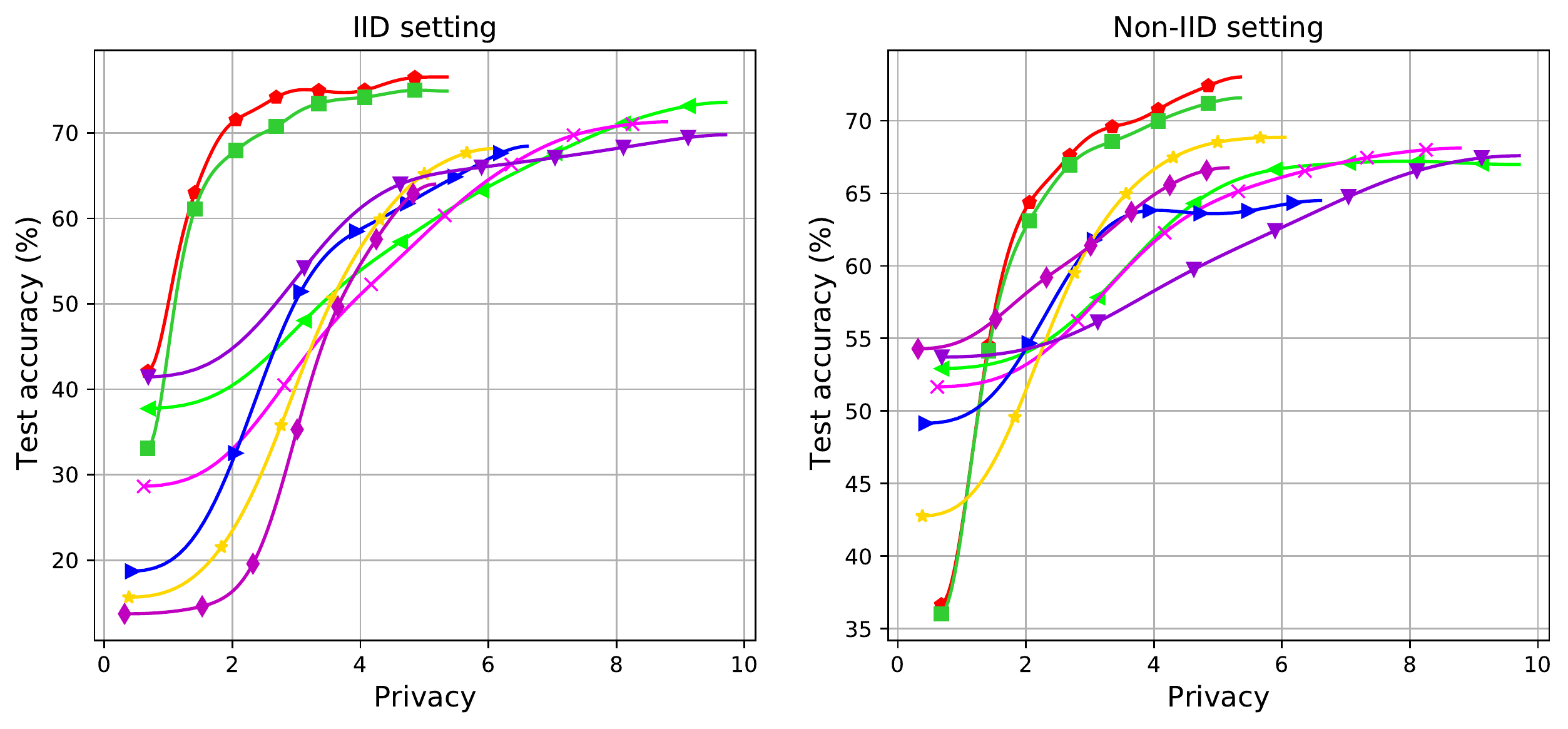}
        \label{utility_privacy_FEMNIST}
    }
    \subfigure[CIFAR-10]{
    	\includegraphics[width=0.48\textwidth]{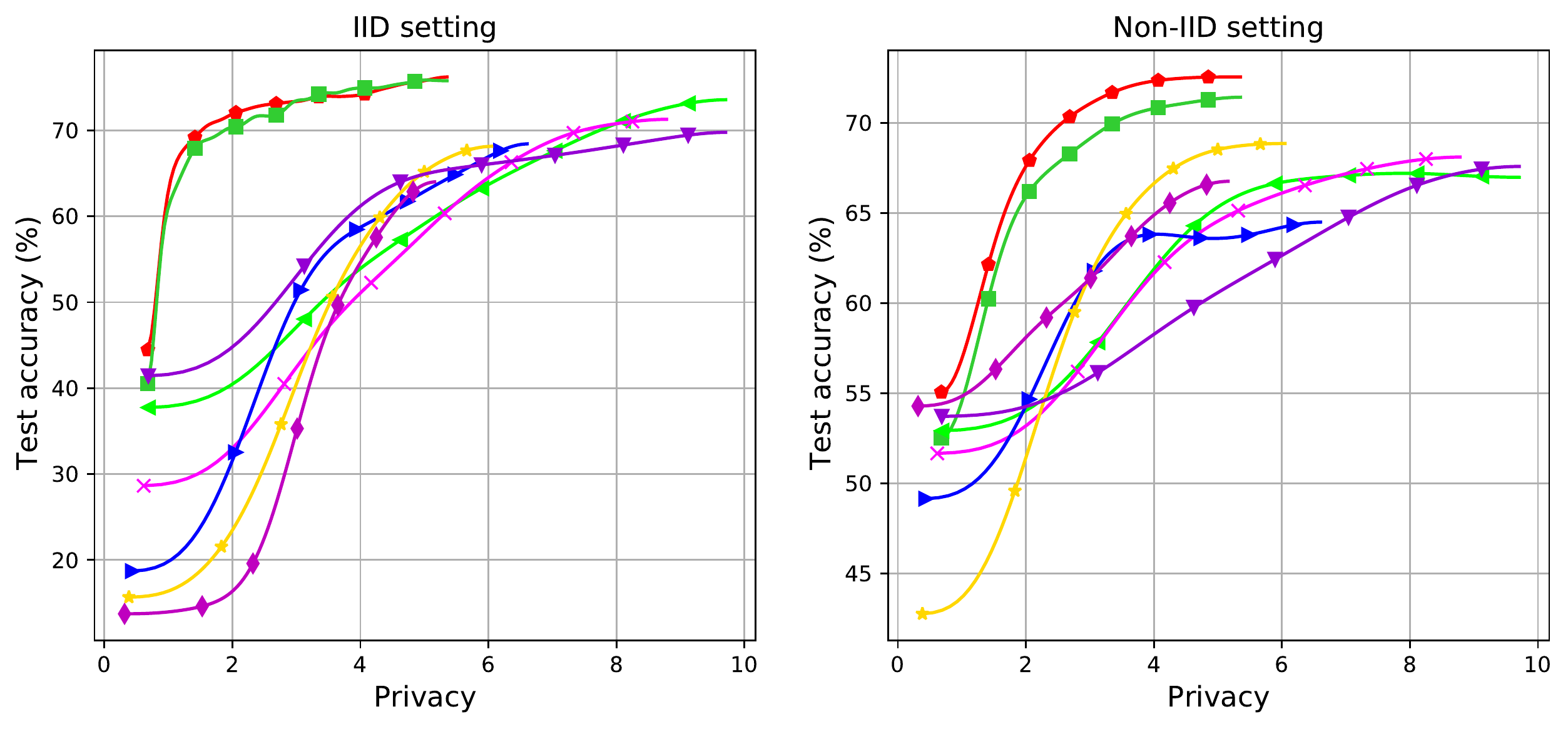}
        \label{utility_privacy_CIFAR}
    }
        \vspace{-0.4cm}
    \caption{ \small The utility-privacy trade-off of different algorithms  on two datasets in both IID and non-IID settings.}
    \vspace{-0.4cm}
    \label{utility_privacy}
\end{figure*}
\begin{figure*}[ht]
\centering

    \subfigure[FEMNIST]{
    	\includegraphics[width=0.48\textwidth]{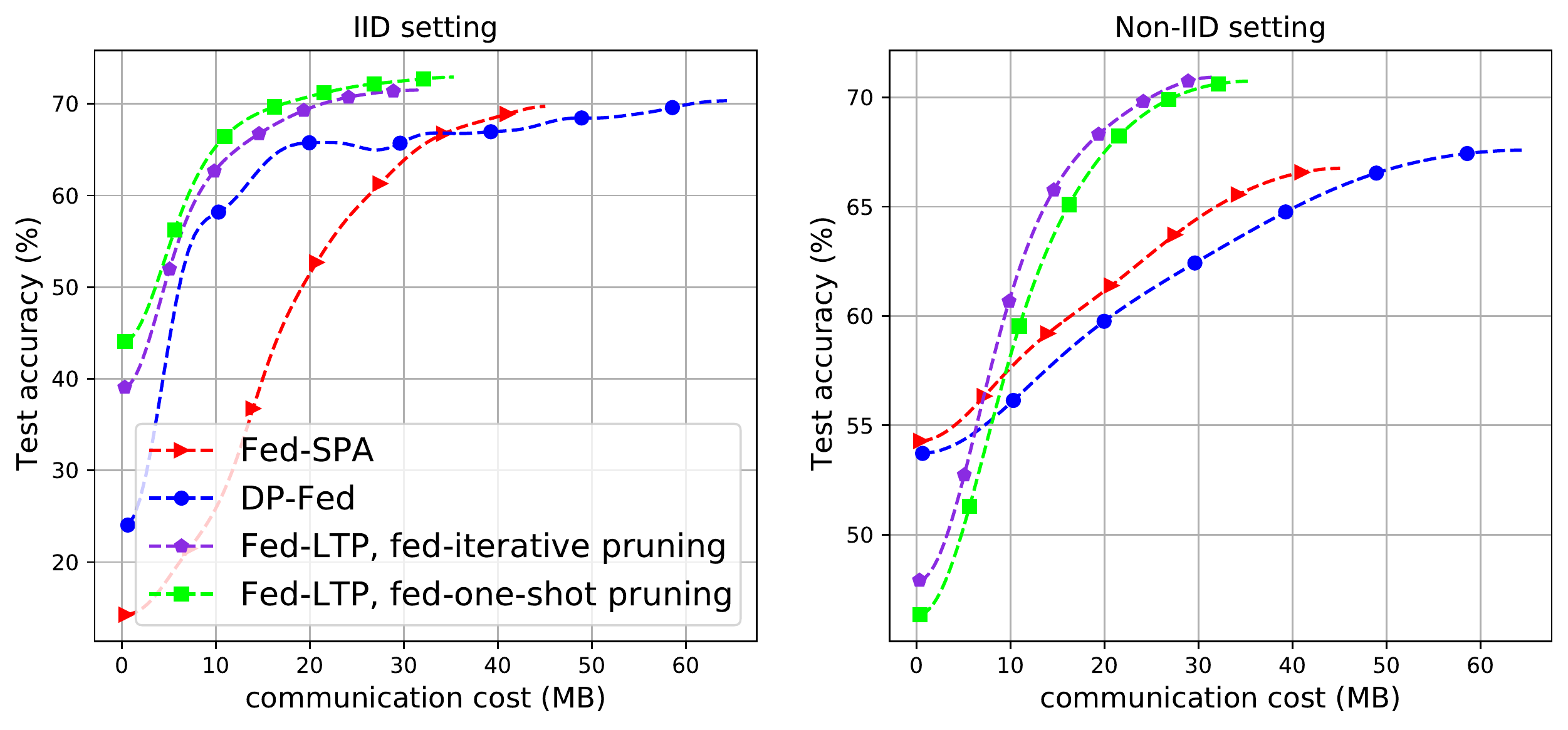}
        \label{comm_cost_FEMNIST}
    }
    \subfigure[CIFAR-10]{
    	\includegraphics[width=0.48\textwidth]{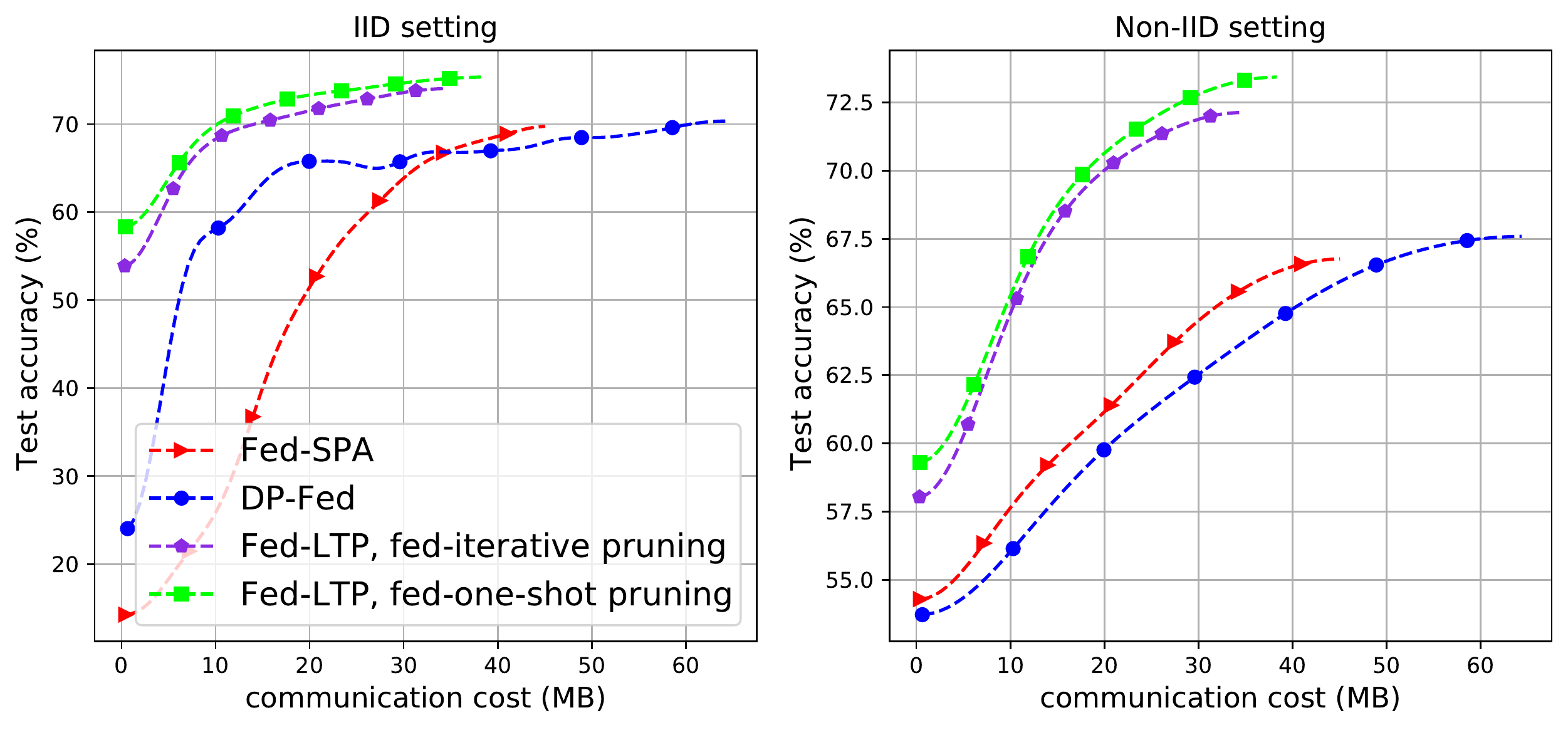}
        \label{comm_cost_CIFAR}
    }
        \vspace{-0.4cm}
    \caption{ \small Test accuracy of different algorithms with accumulated communication cost (MB)  on two datasets in both IID and non-IID settings.}
    \vspace{-0.4cm}
    \label{Accumulated_communication_cost}
\end{figure*}
\subsection{Experimental Results}
We run each experiment 3 times and report the test accuracy based on the validation datasets and the cumulative sum of upload and downstream communication costs across all rounds in each experiment. In the following, we focus on the evaluation of Fed-LTP from various aspects. 




\begin{table}
\centering
\scriptsize
\caption{ \small The model compression ratio stored on MTs $R_{loc}$ of Fed-LTP under fed-iterative pruning and baseline methods at different final retention rates $p$ on the Fashion-MNIST private data.}
\label{final_retention_rate}
\renewcommand{\arraystretch}{1}
\resizebox{1\linewidth}{!}{
\begin{tabular}{cccccccc} 
\toprule
Methods                  & \multicolumn{1}{c}{Public data}                                             & \multicolumn{1}{c}{ $p$ (Average)} & \multicolumn{5}{c}{$R_{loc}$ (Across selected MTs)}  \\ 
\midrule
\multirow{8}{*}{Fed-LTP} & \multirow{4}{*}{FEMNIST} & 0.20                                                                                                & 0.44  & 0.27  & 0.16  & 0.10  & 0.06                          \\
                         &                                                                          & 0.29                                                                                                & 0.52  & 0.36  & 0.25  & 0.18  & 0.12                          \\
                         &                                                                          & 0.40                                                                                                & 0.59  & 0.47  & 0.38  & 0.30  & 0.24                          \\
                         &                                                                          & 0.54                                                                                              & 0.66  & 0.60  & 0.54  & 0.48  & 0.44                          \\ 
\cmidrule{2-8}
                         & \multirow{4}{*}{CIFAR-10}                                                 & 0.28                                                                                               & 0.60 & 0.36 & 0.22 & 0.13 & 0.08                         \\
                         &                                                                          & 0.39                                                                                               & 0.70 & 0.49 & 0.34 & 0.24 & 0.17                         \\
                         &                                                                          & 0.53                                                                                               & 0.79 & 0.64 & 0.51 & 0.41 & 0.33                         \\
                         &                                                                          & 0.73                                                                                               & 0.89 & 0.80 & 0.72 & 0.65 & 0.59                         \\ 
\midrule
Fed-SPA                  & --                                                                      & All                                                                                                & 1.00    & 1.00    & 1.00    & 1.00    & 1.00                            \\
\midrule
DP-Fed                   & --                                                                      & 1.00                                                                                                 & 1.00    & 1.00    & 1.00    & 1.00    & 1.00                            \\
\bottomrule
\end{tabular}
}
\end{table}

\textbf{1) Efficient computation and memory footprint on MTs.} Suppose that the model parameter is represented by a 32-bit floating number and the model compression ratio stored on MTs is $R_{loc}$, which is also regarded as the key parameter that can determine the size of computation overhead and memory footprint on MTs. Since the computational and memory footprint overhead required by MTs is proportional to the size of the training models, the resource constraint for MTs is alleviated when the model size is  reduced.
In particular, the local model is trained with the sparse-to-sparse technique in Fed-LTP, while the baseline methods Fed-SPA and DP-Fed use the dense-to-sparse and dense-to-dense training, respectively. In practice, Fed-SPA only reduces the uploading communication cost without reducing the model size, while DP-Fed trains the network without any model compression or pruning. Therefore, $p \le R_{loc}=1.00$ for all baselines, and $p = R_{loc}<1.00$ for Fed-LTP. 
Furthermore, considering the resource heterogeneity across MTs, we propose the fed-iterative pruning strategy, where the values of $R_{loc}$ among the selected MTs are different and small compared with baselines as shown in Table \ref{final_retention_rate}, while for the fed-one-shot pruning strategy, the values of $R_{loc}$ are the same as the value of $p$. Consequently, the resource overhead of fed-iterative pruning is less than that of fed-one-shot pruning when choosing the same WT as the candidate global model.
\begin{figure*}[htbp]
    \centering
    \subfigure[Fed-iterative pruning]{
        \includegraphics[width=0.48\textwidth]{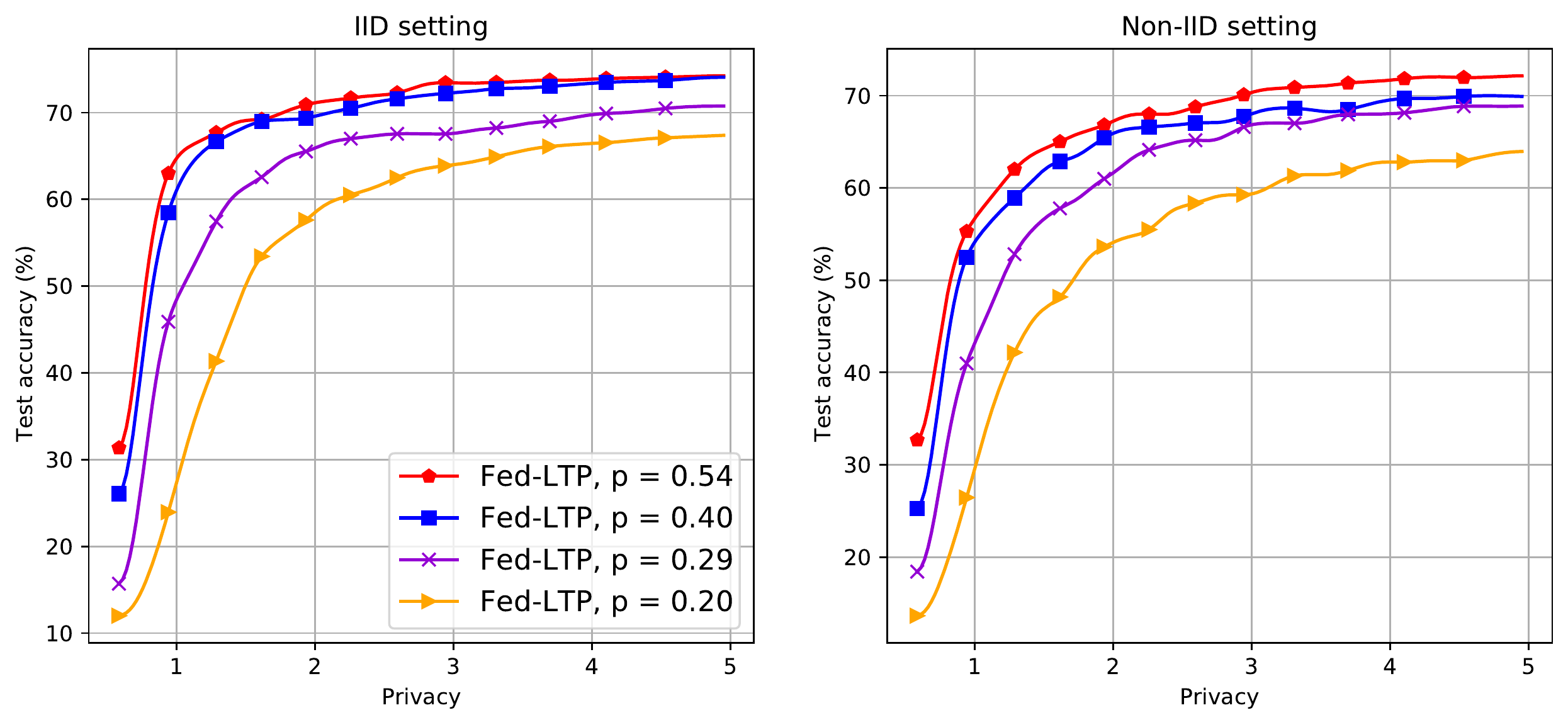}
        \label{ablation_p_fip}
    }
    \subfigure[Fed-one-shot pruning]{
    	\includegraphics[width=0.48\textwidth]{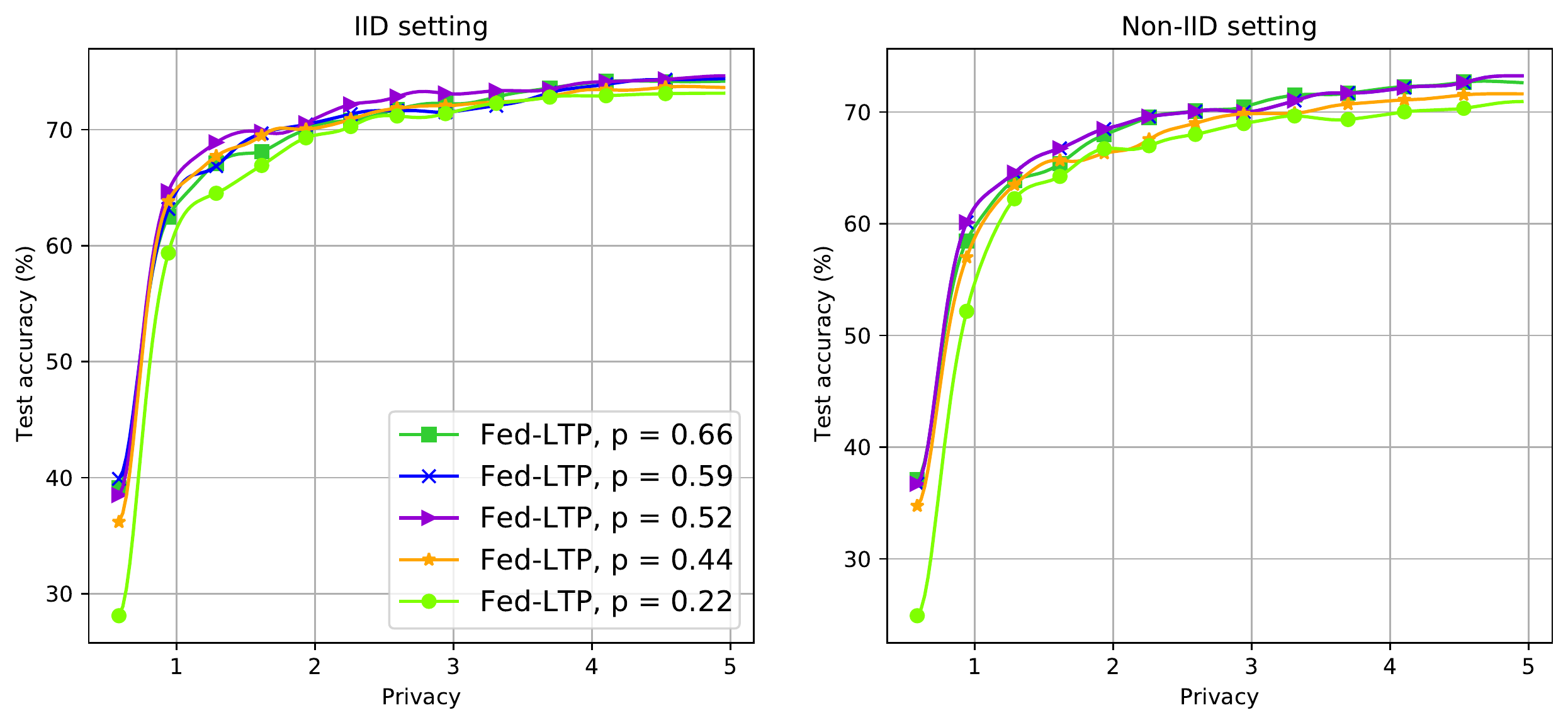}
        \label{ablation_p_fop}
    }
        \vspace{-0.4cm}
    \caption{ \small Impact of pruning: Test accuracy of the global model with the accumulated privacy loss $\epsilon$ at different retention rates $p$ using fed-iterative pruning and fed-one-shot pruning strategies in both IID and non-IID settings.}
    \label{ablation_p}
\end{figure*}

\textbf{2) Better utility-privacy trade-off.} \
We compare the best testing accuracy of all algorithms under the same privacy budget, named the utility-privacy trade-off.
In Fig. \ref{utility_privacy}, Fed-LTP achieves better utility-privacy trade-off with two different pruning strategies than baselines on two datasets: FEMNIST and CIFAR-10 in both IID and non-IID settings. Specifically, with the same privacy loss, Fed-LTP has better test accuracy than baselines. For instance, when $\epsilon=4.0 $ in Fig. \ref{utility_privacy_FEMNIST}, Fed-LTP increases the test accuracy by around $21\%$ and $9\%$ compared with baselines on FEMNIST in the IID and non-IID settings, respectively. Meanwhile, the convergence speed of the model training in Fed-LTP is higher than that in baselines. Therefore, Fed-LTP achieves a better utility-privacy trade-off, which means better model performance and stricter privacy guarantees.

\textbf{3) Efficient communication.} For each algorithm with its optimal retention rate, Fig. \ref{Accumulated_communication_cost} shows its testing accuracy with respect to the cumulative sum of upload and downstream communication costs. On FEMNIST and CIFAR-10, the algorithm settings are: Fed-LTP (fed-iterative pruning, $p=0.40$ and $p=0.39$, respectively; fed-one shot pruning, $p=0.40$ and $p=0.30$, respectively), Fed-SPA ($p=0.6$), and DP-Fed ($p=1$). 
It is clear that for different settings and datasets, the cumulative total communication cost of Fed-LTP is lower than those of the baseline methods while having better convergence speed and test accuracy.
As shown in Fig. \ref{comm_cost_FEMNIST} and Fig. \ref{comm_cost_CIFAR}, with the same communication cost, Fed-LTP under two pruning strategies always achieve better accuracy than baselines due to the use of LTH and further pruning the global model on the server side. 
Consequently, it is clear that Fed-LTP with two different pruning strategies are more communication-efficient than baselines, and the fed-one-shot pruning strategy is more communication-efficient than the fed-iterative pruning strategy. For instance, to achieve a target accuracy $70\%$ on FEMNIST in Fig. \ref{comm_cost_FEMNIST}, in the IID setting, these two pruning strategies produce the communication cost of around $22$MB and $13$MB, respectively; in the non-IID setting, they yield the communication cost of around $25$MB and $18$MB.
\begin{figure}[htbp]
    \centering
    \includegraphics[width=0.48\textwidth]{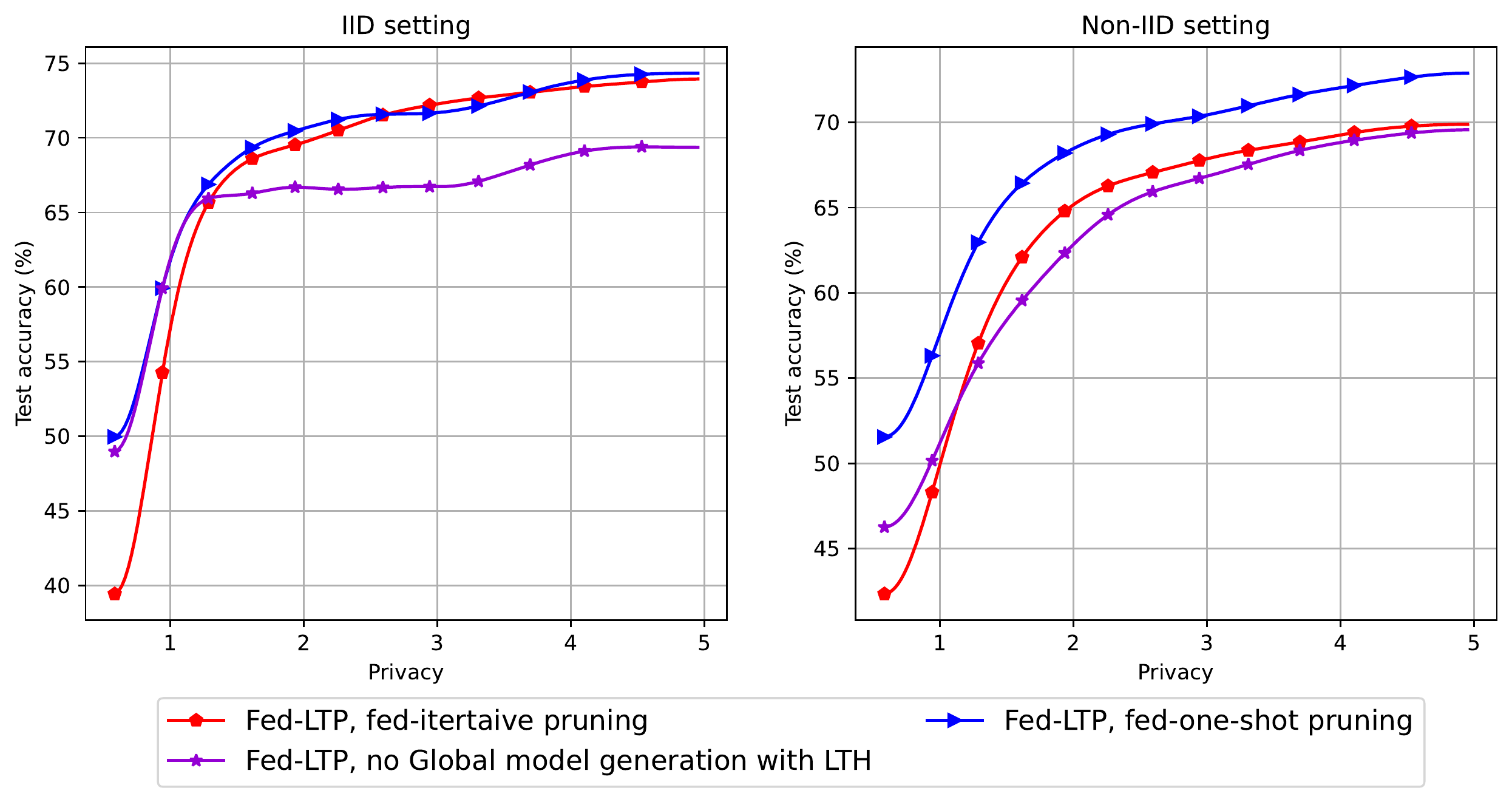}
    \caption{ \small Effect of global model generation with LTH: Test accuracy of the global model with accumulated privacy loss $\epsilon$ under different scenarios with or without a global model generation with LTH module. All scenarios are trained with fed-iterative pruning and fed-one-shot pruning strategies in both IID and non-IID settings.}
    \label{ablation_Global_model_with_LTH}
\end{figure}
This observation confirms that client heterogeneity and downstream cost are two meaningful factors affecting the model performance and communication efficiency, respectively.

\subsection{Discussion of the Pruning Schemes and Retention Rates}
In this section, the impact of pruning in Fed-LTP is investigated with various retention rates and pruning schemes for IID and non-IID data from FEMNIST.

\textbf{Impact of pruning (the final retention rate $p$).} In Fig. \ref{ablation_p}, as the retention rate $p$ decreases, the test accuracy tends to decrease. This can be analyzed from the perspective of parameter sharing, as the loss of model information perceived by the server is evident when the retention rate $p$ is small, causing significant errors in the training process. For instance, in Fig. \ref{ablation_p_fip}, as the retention rate $p$ is set to $0.54$, $0.40$, $0.29$, and $0.20$ with the fed-iterative pruning strategy in the IID setting, test accuracy decreases to around $72\%$, $70\%$, $66\%$ and $58\%$, respectively. 
Furthermore, the decrease in test accuracy is particularly evident with the fed-iterative pruning than with the fed-one-shot pruning (e.g., $p=0.20$ and $p=0.22$ in both IID and Non-IID settings, respectively). 

\textbf{Impact of the pruning scheme.} The difference in performance between fed-iterative pruning and fed-one-shot pruning strategies can be observed from Fig. \ref{utility_privacy} and Fig. \ref{ablation_p}, which is clear that the fed-iterative pruning achieves a performance similar to fed-one-shot pruning while reducing the resource overhead of MTs and communication cost. That means a better balance between performance, computation overhead, and communication cost can be achieved by reducing the complexity of the local models. However, especially in the non-IID setting as shown in Fig. \ref{ablation_p_fip}, the performance is worse than fed-one-shot pruning due to the dual effects of both model and data heterogeneity as the (averaged) final retention rate $p$ decreases. Our current experimental study on the effect of data heterogeneity, and the effect of client heterogeneity across MTs is an interesting extension for future work.
\subsection{Ablation Study}

In this section, we present the ablation study of Fed-LTP. The purpose is to investigate the specific role and effect of  a certain component or hyper-parameter in Fed-LTP, by fixing others to their default values. Ablation experiments are conducted on FEMNIST.

\textbf{Effect of global model generation with LTH.} LTH is used to generate a unified sparse structure of the global model while ensuring better model performance on the server side. 
To validate the effect of global model generation with LTH, we conduct experiments on Fed-LTP with two pruning strategies, and Fed-LTP without the global model generation for comparison, where the unpruned network is used. As in the case of LTH in a centralized learning scenario~\cite{Frankle2019the,frankle2019stabilizing,frankle2020linear}, Fig. \ref{ablation_Global_model_with_LTH} shows that the global model with the LTH module can improve the performance under the two pruning strategies: the test accuracies of Fed-LTP with fed-iterative pruning are improved by around $5\%$ and $1\%$ in the IID and non-IID settings, respectively. Therefore, with the fed-one-shot pruning, the performance gains are around $5\%$ and $4\%$, respectively.

\begin{figure}[htbp]
    \centering
    \includegraphics[width=0.48\textwidth]{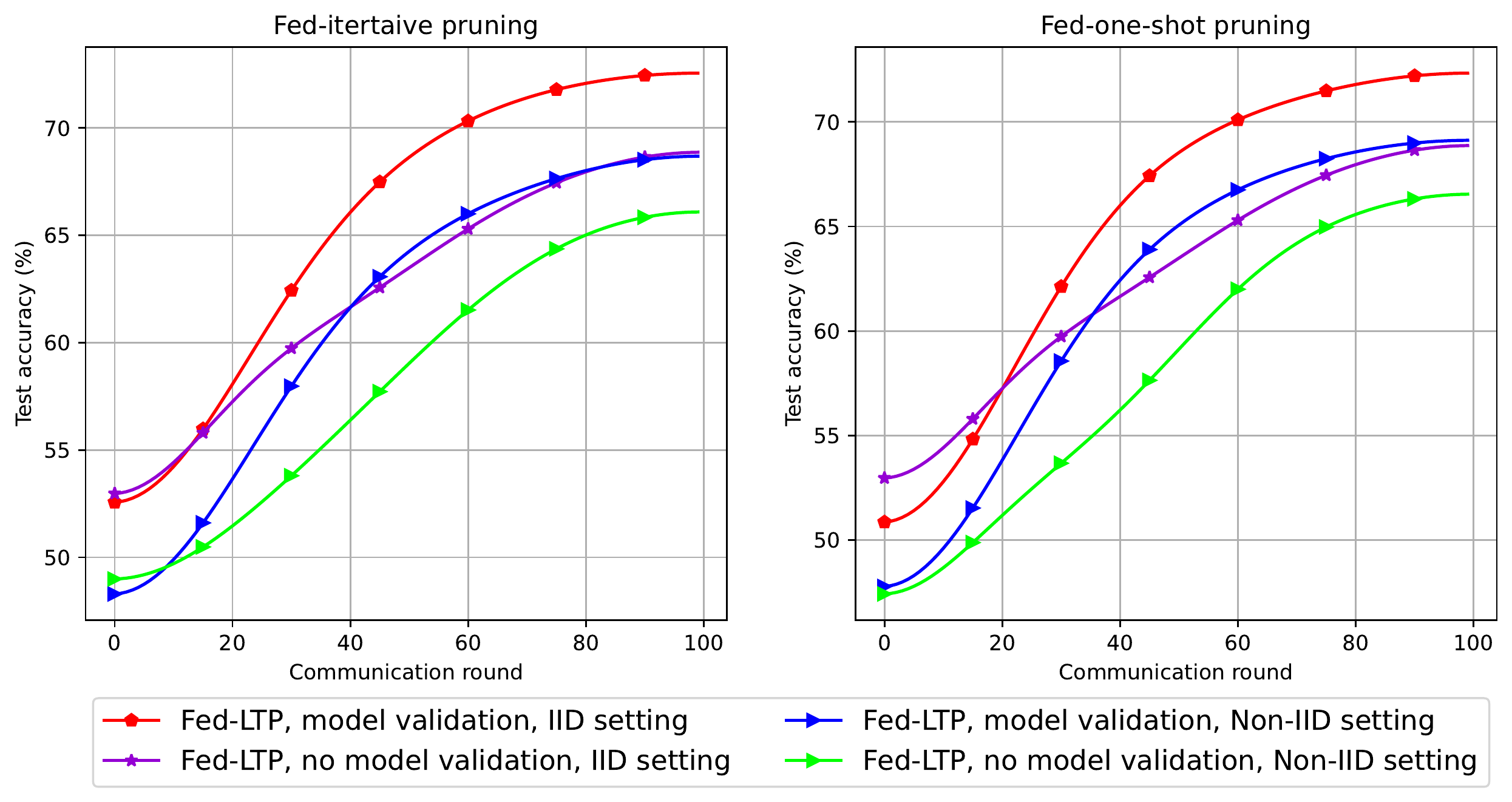}
    \vspace{-0.3cm}
    \caption{ \small Effect of model validation with the Laplace mechanism: Test accuracy of the global model with accumulated privacy loss $\epsilon$ under different scenarios with or without a model validation with the Laplace mechanism module, which was trained in both IID and non-IID settings using fed-iterative pruning and fed-one-shot pruning.}
    \label{ablation_model_selection}
\end{figure}
\begin{figure}[htbp]
    \centering
    \includegraphics[width=0.35\textwidth]{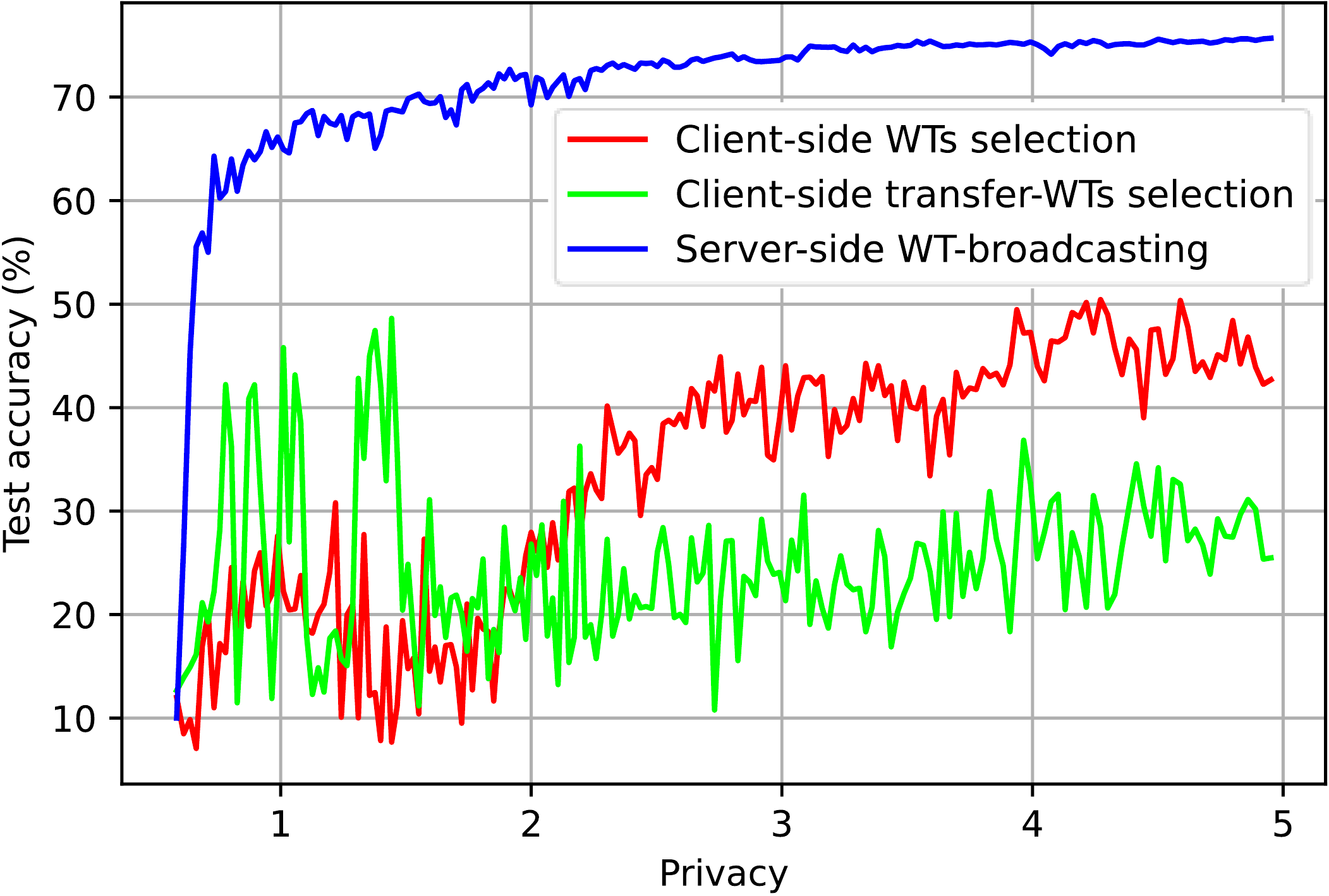}
    \vspace{-0.3cm}
    \caption{ \small Effect of  server-side WT-broadcasting mechanism: Test accuracy of the global model with accumulated privacy loss $\epsilon$ at the different schemes in the IID setting. }
    \vspace{-0.3cm}
    \label{ablation_server_side}
\end{figure}
\textbf{Effect of server-side WT-broadcasting mechanism.} To verify the effect of the server-side WT-broadcasting mechanism, experiments are conducted on three schemes: 1) client-side WTs selection, where MTs generate and save WTs with local private data on the client side; 2) client-side transfer-WTs selection, where MTs only need to select and train a pretrained model created by the server side; 3) the server-side WT-broadcasting mechanism. 
The results in Fig. \ref{ablation_server_side} show that the client-side WTs selection and client-side transfer-WTs selection schemes are inferior to the server-side WT-broadcasting mechanism with more volatile performance, due to the large variations in the structure of WTs generated by different MTs and the biased data distribution across MTs.

\textbf{Effect of model validation with the Laplace mechanism.} When the global model validation is not used on the server side, the global model in the final training process can be regarded as the final model. Therefore, the performance of the trained model usually deteriorates with a large number of rounds especially in the DP setting. This can result in a worse performance of the final model than using the global model validation.
As shown in Fig \ref{ablation_model_selection}, it is clear that the test accuracy of Fed-LTP with model validation is better in both settings and with different pruning schemes.

\textbf{Effect of  privacy loss computing method (zCDP).}
As the privacy analysis introduced in Section V, the privacy loss with zCDP can increase the level of privacy protection during the communication round. In Fig. \ref{ablation_zCDP}, the results are generally worse in terms of both the convergence speed and the best accuracy of the global model than Fed-LTP without the privacy loss with zCDP. Therefore, zCDP is shown to be beneficial to the model utility and privacy guarantee in both settings and with different pruning schemes.

\begin{figure}[htbp]
    \centering
    \includegraphics[width=0.48\textwidth]{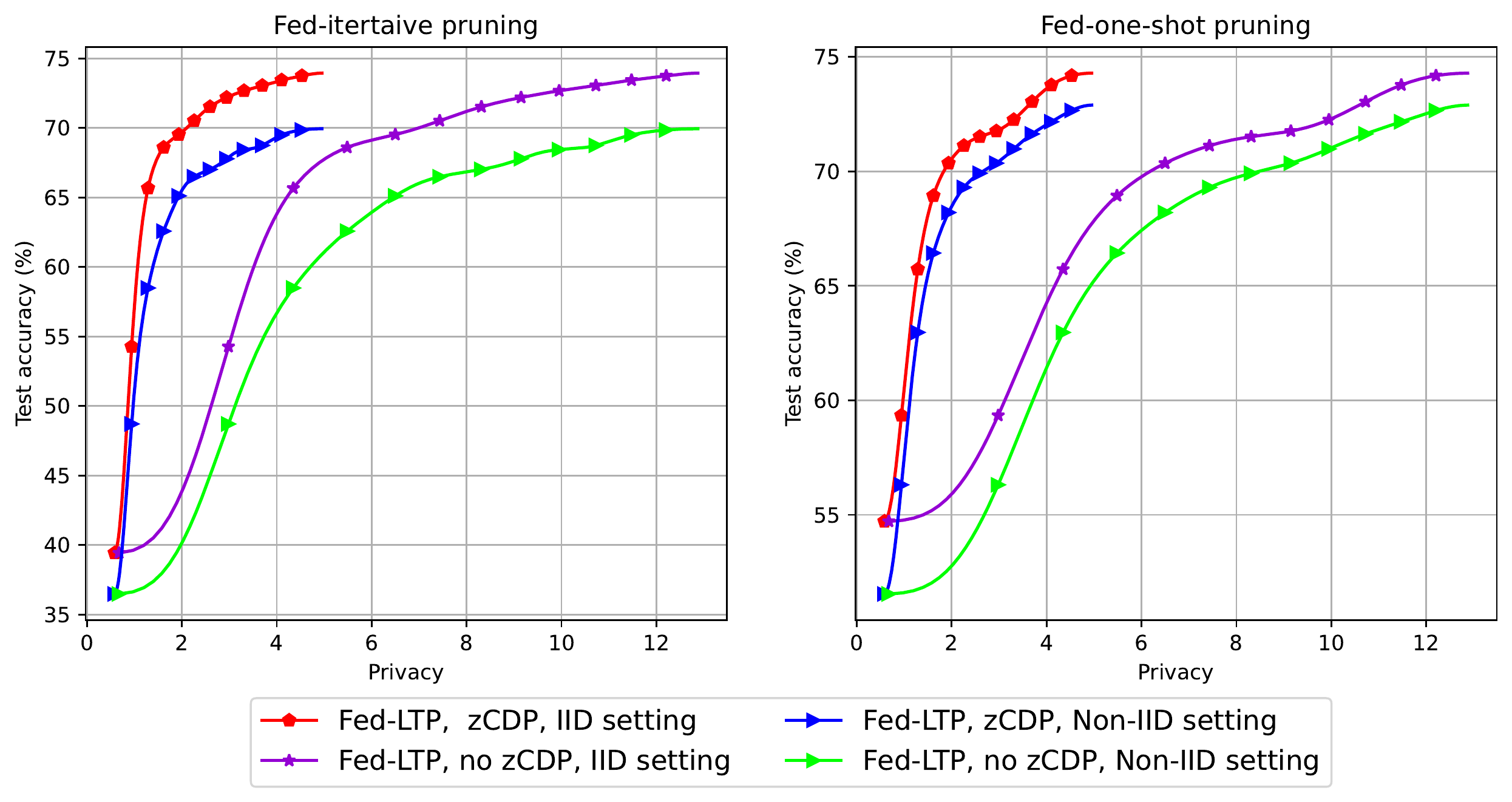}
    \vspace{-0.3cm}
    \caption{ \small Effect of  privacy loss computing method (zCDP): Test accuracy of the global model with accumulated privacy loss $\epsilon$ under different scenarios with or without a zCDP module, which was trained in both IID and non-IID settings using fed-iterative pruning and fed-one-shot pruning strategies. }
    \label{ablation_zCDP}
\end{figure}

\section{CONCLUSIONS}
In this paper, we design a privacy-preserving algorithm in FL (Fed-LTP) that can properly balance computation, memory efficiency of edge devices, and communication efficiency with improved model utility.
It contains a pre-trained model for exploring the sparse network structure and a differentially private global model validation mechanism to ensure the quality of the selected model against over-fitting. 
Meanwhile, we present the privacy analysis combining the privacy costs of model training and validation, and adopt the sparse-to-sparse training to save the limited resources of edge devices. Furthermore, the proposed noise-adding approach and the lightweight model can result in a better balance between the privacy budget and model performance. Finally, extensive experiments are conducted to verify the effectiveness and superiority of the proposed algorithm compared with SOTA methods. For future work, we will further investigate the effect of the iterative nature of the pruning method across clients/MTs with non-IID datasets. In addition, the ability to validate generalization guarantees on non-IID datasets also needs further exploration.
\ifCLASSOPTIONcaptionsoff
  \newpage
\fi



%


\bibliographystyle{IEEEtran}
\bibliography{sample-base}
%





\clearpage

\appendix


\subsection{Notation and high parameters}
In Table \ref{notion}, we present the notion and high parameters used in this paper.

\subsection{Implementation details}\label{imple_detail}
\textbf{Models.} For the MNIST private dataset, the CNN model consists of two 5 $\times$ 5 convolution layers with the ReLu activation function (the first with 10 filters, the second with 20 filters, each followed with 2 $\times$ 2 max pooling), a fully connected layer with 320 units and the ReLu activation function, and a final softmax output layer, referred to as \emph{model 1}. For the Fashion-MNIST private dataset, a CNN model is adopted, which is identical to \emph{model 1} except that the convolutional layer size is 3 $\times$ 3 with the ReLu activation function (the first with 32 filters, the second with 64 filters) and a fully connected layer has 512 units, referred to as \emph{model 2}. In addition, the number of communication rounds $T=50$ and Batch size $B=10$ for \emph{model 1}, and $T=100$ and $B=15$ for \emph{model 2}. 

\noindent
\textbf{Datasets.} 
There are 60K training examples and 10K testing examples on the datasets: MNIST, FEMNIST, and Fashion-MNIST. Specifically, The MNIST and Fashion-MNIST datasets consist of 10 classes of 28 $\times$ 28 handwritten digit images and grayscale images, respectively. The FEMNIST dataset is built by partitioning the data in Extended MNIST based on the writer of the digit/character, which consists of 62 classes. The CIFAR-10 dataset consists of 10 classes of 32 $\times$ 32 images. There are 50K training examples and 10K testing examples in the dataset.

\subsection{The detail of experimental results on the Fashion-MNIST private data}
\begin{table*}
\scriptsize
\caption{The results of Fed-LTP and baseline methods on two public datasets in both IID and non-IID settings under the Fashion-MNIST private dataset. Note that we have the same communication cost and privacy budget $\epsilon$ in both IID and non-IID settings due to using the same model structure. Meanwhile, “Acc” and “Comm”  in Table \ref{all_methods_results} is regarded as the best testing accuracy of the global model and the cumulative sum of upload and downstream communication costs across all rounds, respectively. The compression ratio for Fed-SPA is $p$, which can also be viewed as the final retention rate. For DP-Fed, $p$ = 1.0 without acceleration technique. The privacy loss $\epsilon$ for all algorithms is accumulated across communication rounds \cite{opacus}. For Fed-LTP, $\epsilon$ is accumulated by \eqref{eq:Composition_RDP},  thus it is independent of the final retention rate $p$.}
\label{all_methods_results}
\renewcommand{\arraystretch}{0.8}
\centering
\resizebox{0.9\linewidth}{!}{
\begin{tabular}{ccccc|cccc} 
\toprule
\multirow{3}{*}{\begin{tabular}[c]{@{}c@{}}Methods\\ (The final retention rate: $p$)\end{tabular}}        & \multicolumn{4}{c|}{FEMNIST}                                                     & \multicolumn{4}{c}{CIFAR10}                                                                                                  \\ 
\cmidrule{2-9}
                                & \multicolumn{2}{c}{Acc}         & \multirow{2}{*}{Comm(MB)} & \multirow{2}{*}{$\epsilon$} & \multicolumn{2}{c}{Acc}                                                    & \multirow{2}{*}{Comm(MB)} & \multirow{2}{*}{$\epsilon$}  \\ 
\cmidrule{2-3}\cmidrule{6-7}
                                & IID            & Non-IID        &                           &                    & IID            & Non-IID                                                   &                           &                     \\ 
\midrule
Fed-LTP (fed-iterative pruning) & 72.25          & \textbf{71.23} & \textbf{ 31.66 }          & 5.35               & 75.16          & 72.57                                                     & \textbf{34.38}                     & 5.35                \\
Fed-LTP (fed-one-shot pruning)  & 73.65          & 70.86          & 35.30                     & 5.35               & \textbf{75.87}        & \begin{tabular}[c]{@{}c@{}}\textbf{73.47} \\\end{tabular} & 38.34           & 5.35                \\ 
\midrule
Fed-SPA, $p$ = 1                  & \textbf{74.40} & 68.40          & 64.36                     & 9.71               & 74.40 & 68.40                                                     & 64.36                     & 9.71                \\
Fed-SPA, $p$ = 0.8                & 72.21          & 69.09          & 57.92                     & ~8.78              & 72.21          & 69.09                                                     & 57.92                     & ~8.78               \\
Fed-SPA, $p$ = 0.6                & 70.45          & 65.40          & 51.49                     & 7.77               & 70.45          & 65.40                                                     & 51.49                     & 7.77                \\
Fed-SPA, $p$ = 0.4                & 69.21          & 69.23          & 45.05                     & 6.60               & 69.21          & 69.23                                                     & 45.05                     & 6.60                \\
Fed-SPA, $p$ = 0.2                & 65.26          & 68.96          & 38.62                     & \textbf{5.16}      & 65.26          & 68.96                                                     & 38.62                     & \textbf{5.16}       \\
DP-Fed                          & 70.41          & 68.89          & 64.36                     & 9.71               & 70.41          & 68.89                                                     & 64.36                     & 9.71                \\
\bottomrule
\end{tabular}}
\end{table*}
We compare the best testing accuracy of all algorithms under the same privacy budget, named the utility-privacy trade-off, which is shown in Figure \ref{utility_privacy}. Furthermore, we list the final results in Table \ref{all_methods_results}, which summarizes the final results of DP-Fed, Fed-SPA, and Fed-LTP after $T$ rounds on Fashion-MNIST private dataset in both IID and non-IID settings. 
The cost of the baseline methods is calculated in the same way as in Fed-SPA \cite{Hu2021federated}. Each client in Fed-LTP uses $p \times d \times 32 \times T \times q$ bits, where $d$ is the number of unpruned model parameters to be updated to the server. 

It is very clear that there are three main evaluation indicators and that our algorithm achieves a good balance between them. In order to facilitate discussion and comparative analysis, the following sections are divided into separate discussions and analyses of other indicators under a fixed indicator. We can see that the performance in the non-IID setting is generally worse than that in the IID setting due to the data heterogeneity across the federated clients.
Moremore, Fed-LTP under the two pruning strategies also achieves a better trade-off between accuracy and communication cost, in comparison to the two baseline methods DP-Fed and Fed-SPA with different final retention rates $p$. Specifically, Fed-SPA features better test accuracy but also higher communication and privacy costs when $p$ is large. Meanwhile, DP-Fed is generally worse than Fed-LTP in terms of test accuracy and communication cost.
\subsection{The detail of experimental results on the MNIST private data}\label{private_mnist}
The results are the same as that on Fashion-MNIST private dataset. In particular, we list the evaluation from various aspects.
\subsubsection{Efficient computation and memory footprint of edge devices.}
The computational and memory footprint overhead required by edge devices is proportional to the size of the training models. On the MNIST private dataset, the training model size is the same as the model size on the Fashion-MNIST private dataset. Therefore, the results of computation and memory footprint overheads are also the same as the results on the Fashion-MNIST private dataset as shown in Table \ref{final_retention_rate}. 
\subsubsection{Better utility-privacy trade-off.}

In Figure \ref{mnist_utility_privacy}, we present the best testing accuracy of all algorithms under the same privacy budget $\epsilon$. It is clearly seen that Fed-LTP generates better testing accuracy under a smaller privacy budget. That means our algorithm achieves a better utility-privacy trade-off, which is same as the results on Fashion-MNIST private dataset.

\subsubsection{Efficient communication.}

In Figure \ref{mnist_communication_cost}, we present the best testing accuracy of all algorithms under the same communication cost. It is clearly seen that Fed-LTP generates better testing accuracy at a smaller cost. That means our algorithm achieves a better utility-communication trade-off, which is same as the results on Fashion-MNIST private dataset.

\begin{figure*}[ht]
\centering
    \includegraphics[width=0.6\textwidth]{content/iid_utility_privacy_legend.pdf}
    \subfigure[IID setting]{
        \includegraphics[width=0.48\textwidth]{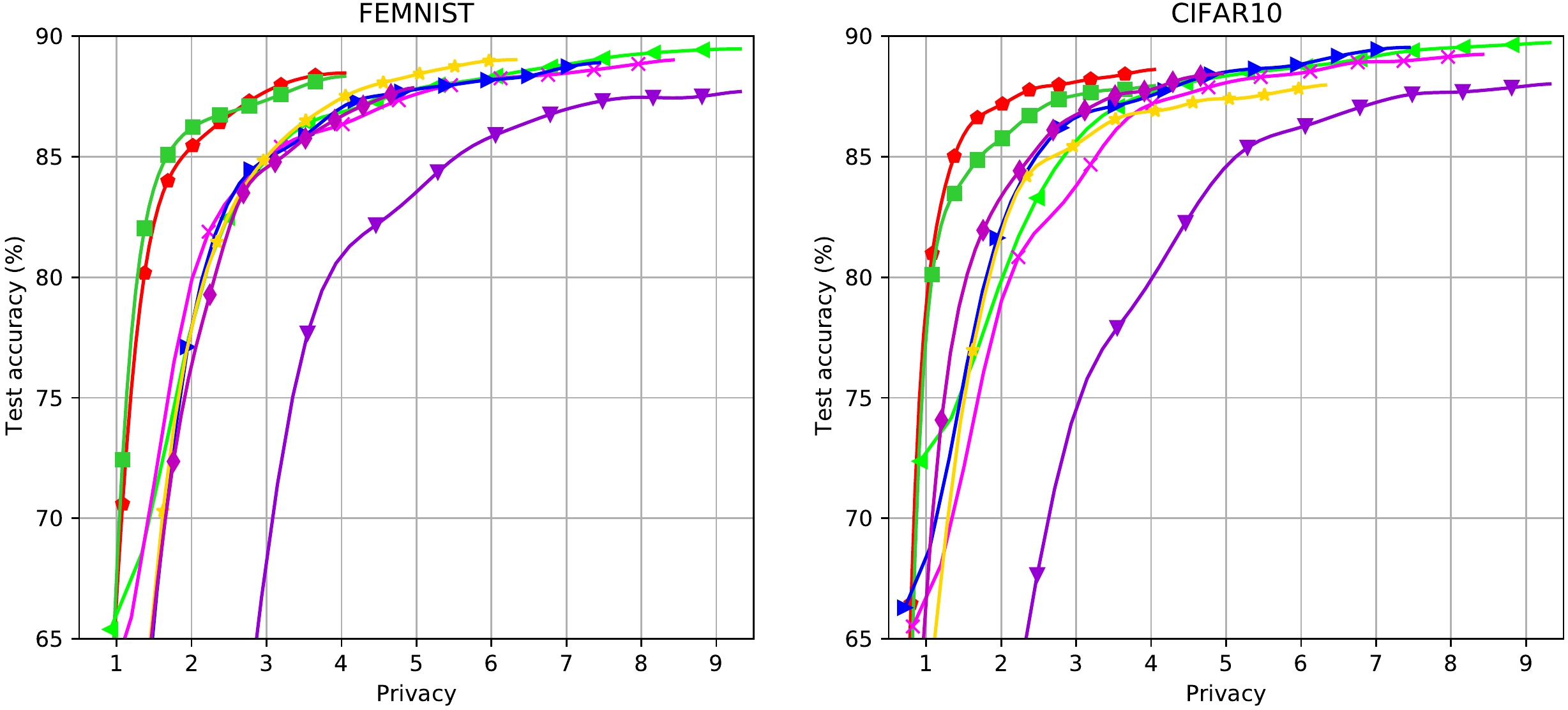}
        \label{mnist_utility_privacy_iid}
    }
    \subfigure[Non-IID setting]{
    	\includegraphics[width=0.48\textwidth]{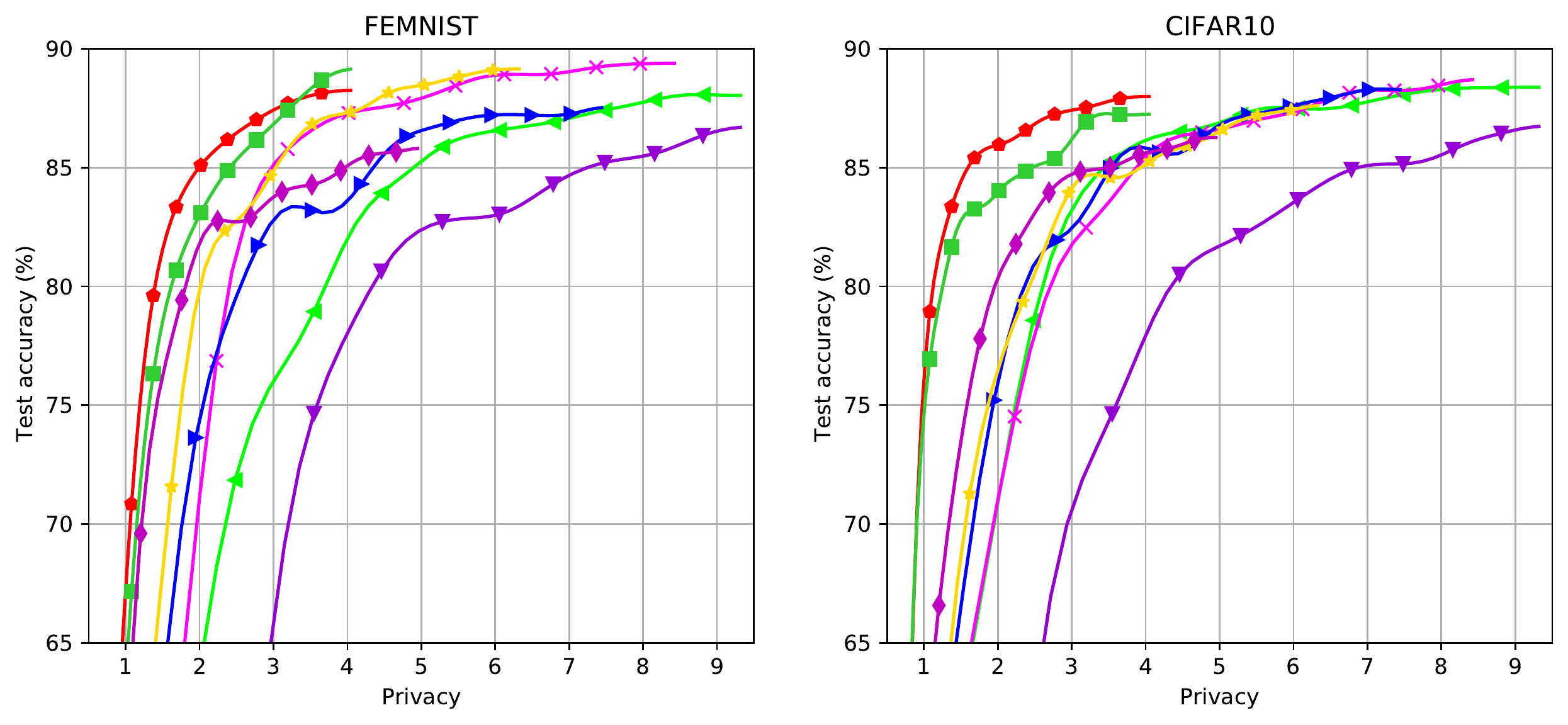}
        \label{mnist_utility_privacy_noniid}
    }
    \caption{The utility-privacy trade-off of different algorithms on FEMNIST and CIFAR10 in both IID and non-IID settings.}
    \label{mnist_utility_privacy}
\end{figure*}

\begin{figure*}[ht]
\centering
    \subfigure[IID setting]{
        \includegraphics[width=0.48\textwidth]{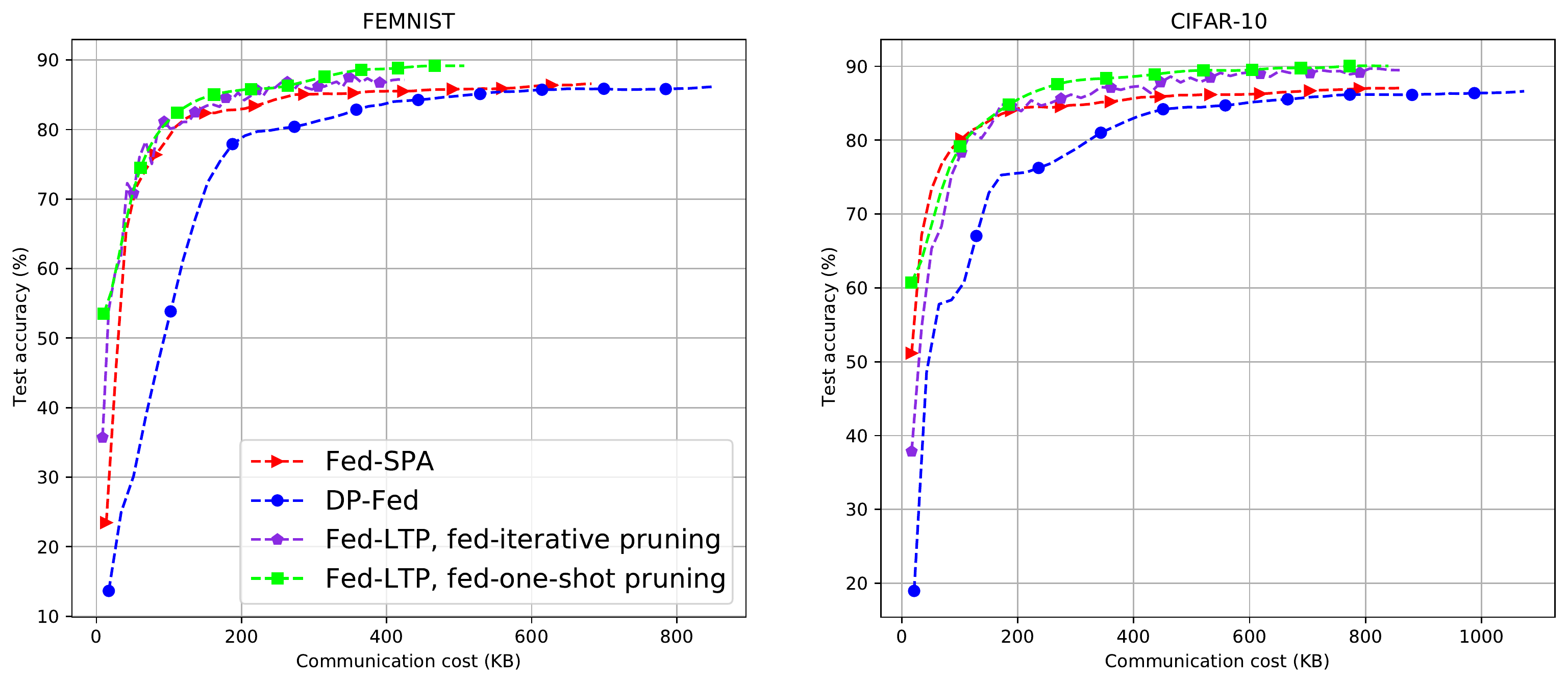}
        \label{mnist_communication_cost_iid}
    }
    \subfigure[Non-IID setting]{
    	\includegraphics[width=0.48\textwidth]{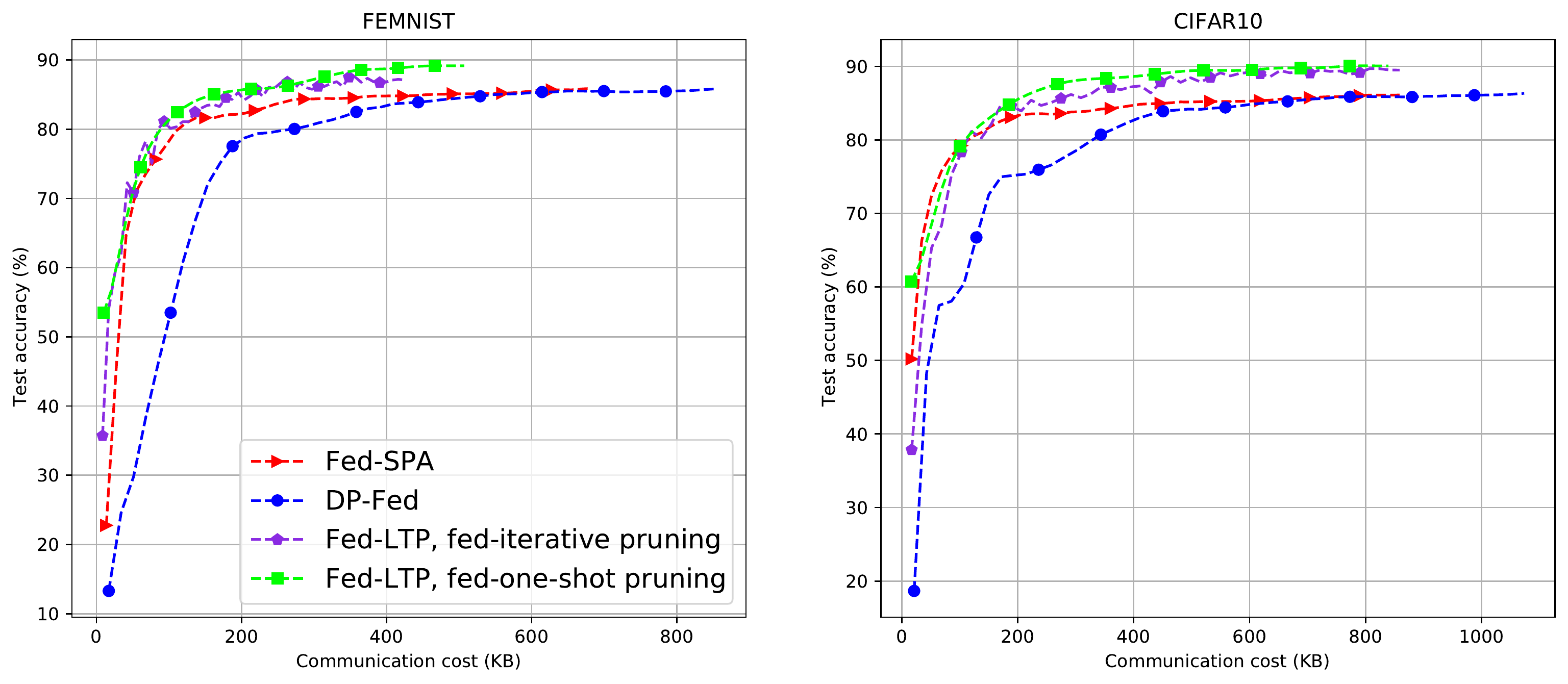}
        \label{mnist_communication_cost_noniid}
    }
    \caption{Test accuracy of different algorithms with accumulated communication cost (MB)  on two datasets in both IID and non-IID settings.}
    \label{mnist_communication_cost}
\end{figure*}

\end{document}